\definecolor{Myblue}{rgb}{0,0,0.6}
\tikzset{
    string/.style={draw=#1, postaction={decorate}, decoration={markings,mark=at position .51 with {\arrow[draw=#1]{>}}}},
    costring/.style={draw=#1, postaction={decorate}, decoration={markings,mark=at position .51 with {\arrow[draw=#1]{<}}}},
    ostring/.style={draw=#1, postaction={decorate}, decoration={markings,mark=at position .47 with {\arrow[draw=#1]{>}}}},
    ustring/.style={draw=#1, postaction={decorate}, decoration={markings,mark=at position .56 with {\arrow[draw=#1]{>}}}},
    oostring/.style={draw=#1, postaction={decorate}, decoration={markings,mark=at position .43 with {\arrow[draw=#1]{>}}}},
    uustring/.style={draw=#1, postaction={decorate}, decoration={markings,mark=at position .59 with {\arrow[draw=#1]{>}}}},
    directed/.style={string=blue!50!black}, 
    odirected/.style={ostring=blue!50!black}, 
    udirected/.style={ustring=blue!50!black}, 
    oodirected/.style={oostring=blue!50!black}, 
    uudirected/.style={uustring=blue!50!black},     
    redirected/.style={costring= blue!50!black},
    redirectedgreen/.style={costring= green!50!black},
    directedgreen/.style={string= green!50!black},
    redirectedlightgreen/.style={costring= green!65!black},
    directedlightgreen/.style={string= green!65!black},
}
\tikzset{-dot-/.style={decoration={
  markings,
  mark=at position 0.5 with {\fill circle (1.875pt);}},postaction={decorate}}}
\tikzset{
	Fdot/.style={circle, draw, fill, inner sep=0pt}, 
	Odot/.style={circle, draw, inner sep=0.1pt, minimum size=0.1cm}
	}
\def\nicepalecolourscheme{\shadedraw[top color=blue!12, bottom color=blue!12, draw=white]}
\newcommand{\raisemath}[1]{\mathpalette{\raisem@th{#1}}}
\newcommand{\raisem@th}[3]{\raisebox{#1}{$#2#3$}}
\renewcommand{\H}{\mathcal H}
\newcommand{\E}{\text{e}}
\newcommand{\I}{\text{i}}
\newcommand{\B}{\mathcal{B}}
\newcommand{\Beq}{\B_{\mathrm{eq}}}
\newcommand{\C}{\mathds{C}}
\newcommand{\Z}{\mathds{Z}}
\def\1{\ifmmode\mathrm{1\!l}\else\mbox{\(\mathrm{1\!l}\)}\fi}
\newcommand{\be}{\begin{equation}}
\newcommand{\ee}{\end{equation}}
\newcommand{\bes}{\begin{equation*}}
\newcommand{\ees}{\end{equation*}}
\newcommand{\Hom}{\operatorname{Hom}}
\newcommand{\End}{\operatorname{End}}
\newcommand{\modu}{\operatorname{mod}}
\newcommand{\hmf}{\operatorname{hmf}}
\newcommand{\ev}{\operatorname{ev}}
\newcommand{\coev}{\operatorname{coev}}
\def\lra{\longrightarrow}
\def\lmt{\longmapsto}
\DeclareMathOperator{\tr}{tr}
\DeclareMathOperator{\str}{str}
\newcommand{\pirr}{\pi^{\text{RR}}_A}
\DeclareMathOperator*{\eq}{=}
\newcommand{\AGC}{A_G^c}
\newcommand{\dX}{{}^\dagger\hspace{-1.8pt}X}
\newcommand{\dA}{{}^\dagger\hspace{-1.8pt}A}
\newcommand{\dsX}{{}^\dagger\hspace{-1.8pt}\mathcal{X}}
\newcommand{\deqX}{{}^\star\hspace{-1.8pt}X} 
\newcommand{\dseqX}{{}^\star\hspace{-1.8pt}\mathcal{X}}
\newcommand\arxiv[2]      {\href{http://arXiv.org/abs/#1}{#2}}
\newcommand\doi[2]        {\href{http://dx.doi.org/#1}{#2}}
\newcommand\httpurl[2]    {\href{http://#1}{#2}}
\theoremstyle{definition}
\newtheorem{definition}{Definition}
\newtheorem{proposition}[definition]{Proposition}
\newtheorem{theorem}[definition]{Theorem}
\newtheorem{lemma}[definition]{Lemma}
\newtheorem{remark}[definition]{Remark}
\newtheorem{example}[definition]{Example}
\numberwithin{equation}{section}
\numberwithin{definition}{section}
\numberwithin{figure}{section}
\newcommand\void[1]{}
\begin{document}

\title{Discrete torsion defects}

\author{Ilka Brunner$^*$ \quad Nils Carqueville$^\dagger$ \quad Daniel Plencner$^*$
\\[0.5cm]
 \normalsize{\tt \href{mailto:ilka.brunner@physik.uni-muenchen.de}{ilka.brunner@physik.uni-muenchen.de}} \\
  \normalsize{\tt \href{mailto:nils.carqueville@univie.ac.at}{nils.carqueville@univie.ac.at}} \\
  \normalsize{\tt \href{mailto:daniel.plencner@physik.uni-muenchen.de}{daniel.plencner@physik.uni-muenchen.de}}\\[0.1cm]
  {\normalsize\slshape $^*$Arnold Sommerfeld Center for Theoretical Physics, LMU M\"unchen}\\[-0.1cm]
  {\normalsize\slshape $^*$Excellence Cluster Universe, Garching}\\[-0.0cm]
  {\normalsize\slshape $^\dagger$Erwin Schr\"odinger Institute \& Fakult\"at f\"ur Mathematik, University of Vienna}\\[-0.1cm]
}
\date{}
\maketitle

\vspace{-11.8cm}
\hfill {\scriptsize LMU-ASC 22/14}

\vspace{11cm}

\begin{abstract}
Orbifolding two-dimensional quantum field theories by a symmetry group can involve a choice of discrete torsion. 
We apply the general formalism of `orbifolding defects' to study and elucidate discrete torsion for topological field theories. 
In the case of Landau-Ginzburg models only the bulk sector had been studied previously, and we re-derive all known results. 
We also introduce the notion of `projective matrix factorisations', show how they naturally describe boundary and defect sectors, 
and we further illustrate the efficiency of the defect-based approach by explicitly computing RR charges. 

Roughly half of our results are not restricted to Landau-Ginzburg models but hold more generally, for any topological field theory. 
In particular we prove that for a pivotal bicategory, any two objects of its orbifold completion that have the same base are orbifold equivalent. Equivalently, from any orbifold theory (including those based on nonabelian groups) the original unorbifolded theory can be obtained by orbifolding via the `quantum symmetry defect'. 
\end{abstract}

\thispagestyle{empty}
\newpage

\tableofcontents

\section{Introduction}\label{sec:introduction}

Orbifold constructions allow to obtain new theories from old ones, provided the initial theory is equipped with a symmetry group $G$. Orbifolds have been studied extensively in particular for two-dimensional conformal and topological field theories. On surfaces without boundary, the construction is performed in two steps: First, new twisted sectors are added to the theory. In string theory, one understands them as strings that close only up to an action of an element of the group.  As a second step one projects to $G$-invariant states. 

While the addition of twisted sectors is a universal procedure, the projection involves in general some choice. To specify  a projection, one needs an action of~$g$ in the $h$-twisted sector, for all $g,h\in G$. This is generally not uniquely defined, instead, given a consistent action of~$g$ in the $h$-twisted sector ${\cal H}_h$, one can ask under which circumstances the action
$$
\hat{g} : {\cal H}_h \lra {\cal H}_h 
\, , \quad 
\hat{g} = \varepsilon (g,h) \cdot g \, ,
$$
with $\varepsilon(g,h)$ a phase, leads to an alternative consistent and inequivalent projection. There are constraints originating from the consistency of the theory on higher genus Riemann surfaces \cite{VafaDiscreteTorsion}.
The choice of phase~$\varepsilon$ is usually referred to as `discrete torsion' and is classified by $H^2(G, U(1))$, the second cohomology of the group with values in $U(1)$. 
Elements $c\in H^2(G, U(1))$ can be viewed as maps from $G\times G$ to $U(1)$ which are in the kernel of an operator~$\delta$ with
$$
(\delta c)(g,h,k) = \frac{ c(g, hk) \, c(h, k)}{c(g, h) \, c(gh, k)} \, . 
$$
This gives the condition 
$
c(g,hk) c(h,k) = c(g,h) c(gh,k) 
$ 
for all $g,h,k\in G$, and furthermore two cocycles $c,c'$ are identified if they differ by a coboundary in the sense that there is a map $\alpha: G \rightarrow U(1)$ such that $c(g,h) = \tfrac{\alpha(g) \alpha(h)}{\alpha(gh)} c'(g,h)$. 
For abelian groups, it is well-known that the choice of orbifold projection is related to the cocycle as
$$
\varepsilon(g, h) = \frac{c(g,h)}{c(h,g)} \, . 
$$

\medskip

To include worldsheets with boundary in orbifold backgrounds, one 
starts with consistent boundary conditions in the parent theory and, if necessary, renders them
invariant by summing over the images under the group action. This  summation is already fully determined by the orbifold action on bulk fields. On the invariant boundary conditions, one then implements an action of the orbifold group on open string states. This action is not fully determined by the bulk projection, but involves a choice of a representation on the Chan-Paton labels. 

It was argued in \cite{Dou,DF,gomis,sharpe,Aspinwall:2000xv,hauer-krogh} that for theories with discrete torsion the representations on Chan-Paton labels are projective.\footnote{However, examples of D-branes exist where the representation is more complicated \cite{gaberdiel, cg0101143}.}
In fact, the space $H^2(G, U(1))$ also classifies central extensions 
$$
1 \lra U(1) \lra \widehat{G} \lra {G} \lra 1 
$$
of the group~$G$. 
A projective representation is then given by a map $ \gamma: G\to \widehat{G}$ that inverts the respective map in the short exact sequence. One finds that
$$
\gamma (g) \gamma (h) = c(g,h) \cdot \gamma (gh)
$$
for an appropriate $c\in H^2(G, U(1))$. 

\medskip

In the present paper, we apply the general theory \cite{ffrs0909.5013,cr1210.6363, BCP1} of orbifolds via defects to the special case of discrete torsion orbifolds. 
For every $g\in G$ there is a defect ${}_g I$ which implements the action of~$g$. For bulk fields this means that one obtains the image of the bulk field under~$g$ by pulling the defect across the insertion point of the bulk field. Fields in the~$g$-twisted sector are regarded as defect changing operators between ${}_g I$ and the identity defect $I={}_e I$. 

To formulate the orbifold theory, one considers the superposition 
$$
A_G= \bigoplus_{g\in G} {}_g I \, . 
$$
Correlation functions of the orbifold theory are then  obtained as correlators of the original theory with a sufficiently fine network of $A_G$ defect lines \cite{ffrs0909.5013, dkr1107.0495}. Consistency requires that the correlator is independent of the chosen defect network; 
this invariance in particular encodes an averaging over twisted sectors, see e.\,g.~\cite[Rem.\,3.6]{cr1210.6363}. 
It was shown in \cite{ffrs0909.5013} for rational conformal field theories and in \cite{cr1210.6363} for topological field theories that this translates into invariance under certain local changes of the defect network that are encoded in the structure of a (symmetric) separable Frobenius algebra $A_G$. In particular, the product of this algebra consists of a defect junction with two ingoing and one outgoing defect line, at which a suitable defect changing field must be inserted. For rational CFTs one finds \cite{tft3, ffrs0909.5013} that the different consistent choices again correspond to $H^2(G,U(1))$, such that theories with discrete 
torsion are in this framework described by a choice of defect changing field for trivalent junctions. To distinguish different choices, we write $A_G^c$ for the defect $A_G$ when viewed as the Frobenius algebra associated to $c \in H^2(G,U(1))$. 

Note that this framework applies quite generally, in particular both abelian and nonabelian orbifolds can be treated on the same footing. 
In fact, one does not necessarily have to restrict to orbifolding defects of the form $A_G$ originating from a symmetry group of the parent theory. The only requirement is that the orbifolding defect~$A$ has the structure of a separable Frobenius algebra. 
In general the algebra~$A$ replaces the group~$G$, and it is the (only) fundamental object to build the orbifold theory; everything else can be derived from it. 
The special cases $A=A_G^c$ are precisely how discrete torsion orbifolds fall into the general framework \cite{ffrs0909.5013}. 

Given any orbifolding defect~$A$, to include consistent boundary conditions one has to require that $A$-lines can end on boundaries, where the junction is again interpreted as the insertion point of a field. It was shown in \cite{tft1} for rational conformal field theories and in \cite{cr1210.6363} for topological field theories that a consistent class of orbifold boundary conditions is given by $A$-modules in the parent theory. The module structure is in particular important to show that the bulk-boundary maps of twisted sector fields are well-defined in the orbifold theory. The general formalism developed in \cite{tft1, cr1210.6363, BCP1} guarantees that the correlators obtained from the parent theory with defects satisfy all axioms of open/closed topological field theory. 

\medskip

In this paper, we will apply the general defect formalism to discuss orbifolds of  Landau-Ginzburg models with discrete torsion, i.\,e.~the case $A=A_G^c$. For surfaces without boundaries, this was first discussed in \cite{IntriligatorVafa1990}, where in particular the different projections on the closed string sector were obtained. As a first step we will rederive these results from the defect perspective.

Boundary conditions in Landau-Ginzburg models are given by matrix factorisations \cite{kl0210, bhls0305, l0312}. In orbifold theories without discrete torsion, these matrix factorisations have to be equivariant \cite{add0401}, but the case with discrete torsion has not previously been discussed in the literature. In this paper we follow the general defect-based formalism which tells us that boundary conditions now are $A_G^c$-modules. We shall show that these are the same as what we call `projective matrix factorisations', i.\,e.~equivariant factorisations furnished with a projective representation of the orbifold group. 

Defects in discrete torsion orbifolds can be comfortably discussed in our setting too. They are described as $A_G^c$-bimodules. An interesting structure on the level of defects is the fusion product, which arises when two defects merge to form a new defect. In topological theories, this product is non-singular and in the case of (unorbifolded) Landau-Ginzburg models it is given by the tensor product of two matrix factorisations. 
In an $A$-orbifold this tensor product has to be taken over the algebra~$A$. 
For discrete torsion orbifolds one simply chooses $A=A_G^c$, and we will see that fusion products can be computed very explicitly. Furthermore, as a corollary of the general discussion of fusion in \cite{cr1210.6363, BCP1} one obtains the Cardy condition for discrete torsion orbifolds: 
two-point correlators of boundary-bulk maps are given by traces in the open sector. 
For non-projective $G$-orbifolds this leads to the Landau-Ginzburg version of the equivariant Hirzebruch-Riemann-Roch theorem first proved in \cite{pv1002.2116}; we generalise the Cardy condition to the case of projective matrix factorisations (but we do not further evaluate either side of the equality). 

Summarising, in the defect approach to orbifolds, all the necessary information is encoded in the orbifolding defect $A_G^c$. 
In the next section, we lay out the general construction for Landau-Ginzburg models with discrete torsion. 
In Section~\ref{sec:disccorrelators}, we discuss the consequences of $G$-invariance for disc correlators, and we show how the computation of RR charges simplifies for D-branes carrying induced representations. 
Finally in Section~\ref{sec:orbifoldequivalences} we prove a theorem on orbifold equivalences in a general bicategorical setting. This may be of general interest, 
but our discussion at the end of Sections~\ref{subsec:defectsec} and~\ref{sec:orbifoldequivalences} also illustrates its significance for TFTs such as Landau-Ginzburg models: orbifolding the $G$-orbifold of a theory by its quantum symmetry defect reproduces the original theory, a result which previously was known only for abelian~$G$.

\section{Discrete torsion Landau-Ginzburg orbifolds}\label{sec:DTfromAGc}

\subsection{Orbifolding defect}\label{subsec:orbidefect}

We fix a Landau-Ginzburg potential $W \in \C[x_1,\ldots,x_n]$ and recall that a defect~$A$ in the theory~$W$ is a matrix factorisation of $W(x') - W(x)$ \cite{br0707.0922}. 
A particular example is the invisible or identity defect $I=I_W$ with twisted differential 
\be\label{eq:IdDef}
\bigotimes_{i=1}^n 
\begin{pmatrix}
0 & x'_i - x_i \\
\tfrac{W(x_1,\ldots,x_{i-1},x'_i,\ldots,x'_n) - W(x_1,\ldots,x_{i},x'_{i+1},\ldots,x'_n)}{x'_i - x_i} & 0
\end{pmatrix}
.
\ee
It has a natural left action $\lambda_X: I \otimes X \rightarrow X$ on any defect~$X$ with the theory~$W$ to its left. For details and background we refer to \cite{br0707.0922, cm1208.1481, cr1210.6363}. 

For~$A$ to give rise to a consistent orbifold theory it must have the structure of a separable Frobenius algebra \cite{ffrs0909.5013, cr1210.6363, BCP1}. The precise definition can e.\,g.~be found in \cite[Sect.\,2.2]{cr1210.6363}; most relevant for our purposes is that~$A$ comes with a trivalent junction field 
$$
\mu = 
\begin{tikzpicture}[very thick,scale=0.75,color=green!50!black, baseline=0.4cm]
\draw[-dot-] (3,0) .. controls +(0,1) and +(0,1) .. (2,0);
\draw (2.5,0.75) -- (2.5,1.5); 
\end{tikzpicture} 
: AÊ\otimes A \lra A \, ,
$$
the `multiplication' of the algebra~$A$. 

Every group orbifold can be described in terms of the associated defect. For a finite symmetry group~$G$ it is given by
\be\label{eq:AG}
A_G = \bigoplus_{g\in G} {}_g I 
\quad \text{with} \quad
\mu = \sum_{g,h \in G} {}_g (\lambda_{{}_h I}) \, . 
\ee
Here the notation ${}_g(-)$ means that in the respective matrix such as~\eqref{eq:IdDef} every `left' or `primed' variable~$x'_i$ is to be replaced by $g(x'_i)$. 

More generally, we can make use of the $\Z_2$-shift functor $[-]$ and consider the defects 
\be\label{eq:AGK}
A_{G,\kappa} = \bigoplus_{g\in G} {}_g I [\kappa_g]
\quad \text{with} \quad
\mu = \sum_{g,h \in G} {}_g (\lambda_{{}_h I[\kappa_h]})[\kappa_g] 
\ee
for all group homomorphisms $\kappa: G \rightarrow \Z_2$. 
The reason that there are several defects $A_{G,\kappa}$ associated to a symmetry group~$G$ is that in Landau-Ginzburg models, the action of $g\in G$ on untwisted (c,c) fields (i.\,e.~polynomials in~$x_i$) does not uniquely determine the action on all bulk fields in the orbifold theory. Rather, as was shown in~\cite{IntriligatorVafa1990}, the action on untwisted RR ground states is determined only up to a sign $(-1)^{\kappa_g}$. Below in Remark~\ref{rem:AGkappa} we will recover these signs from our defect-driven perspective by `wrapping $A_{G,\kappa}$ around RR ground states'. 

Another way to understand the form of $A_{G,\kappa}$ is via the $\Z_2$-symmetry generated by $(-1)^{F_s}$ that maps all fields $\phi$ in the RR sector to $-\phi$. Taking $\kappa \neq 0$ corresponds to orbifolding by $G'=\{(g,\kappa_g) \,|\, g\in G\} \subset G\times \Z_2$, and the orbifold defect associated to the $\Z_2$-symmetry is $A_{\langle (-1)^{F_s}\rangle}= I \oplus I[1]$. 
We then see that $A_{G,\kappa} \cong A_{G'}=\bigoplus_{g\in G} {}_g I \otimes I[\kappa_g] \subset A_G \otimes A_{\langle (-1)^{F_s} \rangle}$. All results such as Proposition~\ref{prop:AGcsFA} and Theorem~\ref{thm:modAGc} below hold for any $A_{G,\kappa}$, but to avoid clutter we only treat the case $\kappa=0$ explicitly outside of Remark~\ref{rem:AGkappa} and Example~\ref{ex:inducedcharges}. 

\medskip

It is natural to ask whether the underlying defect may be endowed with more than one separable Frobenius structure. This would mean that the same group~$G$ can give rise to different orbifold theories. Of course we expect this to be related to discrete torsion, and below we will make this relation precise. To do so we recall that $H^2(G,U(1))$, the second group cohomology with values in $U(1)$, is given by maps $c: G \times G \rightarrow U(1)$ with 
\be\label{eq:cinH2}
c(g,hk) \, c(h,k) = c(g,h) \, c(gh,k) 
\quad \text{for all } g,h,k \in G
\ee
modulo an equivalence relation via 1-coboundaries, with $c\sim c'$ if there is a map $\alpha: G \rightarrow U(1)$ such that $c(g,h) = \tfrac{\alpha(g) \alpha(h)}{\alpha(gh)} c'(g,h)$. We will write~$c$ both for the cohomology class and its representative. 

\begin{proposition}\label{prop:AGcsFA}
If the
graded
matrix factorisation~$A_G$ satisfies $\text{dim} \Hom^0(I, A_G)=1$, then its separable Frobenius structures are classified by $H^2(G,U(1))$. 
In particular, the multiplication associated to $c \in H^2(G,U(1))$ is
\be\label{eq:muGc}
\mu_G^c = \sum_{g,h \in G} c(g,h) \cdot {}_g \! \left( \lambda_{{}_h I} \right) . 
\ee
\end{proposition}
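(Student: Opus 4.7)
The plan is to leverage the direct-sum decomposition $A_G = \bigoplus_g {}_g I$ and translate the hypothesis $\dim\Hom^0(I, A_G) = 1$ into a statement on morphisms between the individual summands. Since the $G$-action on variables gives isomorphisms of $\Z_2$-graded Hom spaces $\Hom^0({}_g I, {}_h I) \cong \Hom^0(I, {}_{g^{-1}h} I)$, the hypothesis implies $\dim \Hom^0({}_g I, {}_h I) = \delta_{g,h}$. Combined with the fusion identification ${}_g I \otimes {}_h I \cong {}_{gh} I$ and the canonical generator ${}_g(\lambda_{{}_h I}) : {}_g I \otimes {}_h I \to {}_{gh} I$ obtained by $g$-twisting the left unitor, this forces $\Hom^0({}_g I \otimes {}_h I, {}_k I)$ to be one-dimensional for $k=gh$ and trivial otherwise. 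Consequently every even morphism $A_G \otimes A_G \to A_G$ must take the form $\mu = \sum_{g,h} c(g,h) \cdot {}_g(\lambda_{{}_h I})$ for some function $c : G \times G \to \C$, which is already the shape of \eqref{eq:muGc}.

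Next I would impose associativity of $\mu$ on each triple summand ${}_g I \otimes {}_h I \otimes {}_k I$. Both sides of $\mu \circ (\mu \otimes \id) = \mu \circ (\id \otimes \mu)$ take values in the one-dimensional space $\Hom^0({}_g I \otimes {}_h I \otimes {}_k I, {}_{ghk} I)$; the strict associativity of the left action $\lambda_I$ together with naturality of the $g$-twist then identifies the two sides with $c(g,h) c(gh,k)$ and $c(g,hk) c(h,k)$ times the same canonical generator. Associativity is therefore equivalent to the 2-cocycle equation \eqref{eq:cinH2}. Unitality forces $c(e,g) = c(g,e) = c(e,e)$, which can be normalised to $1$; the Frobenius and separability axioms, applied together with the same Hom-space argument for $\Hom^0(A_G, I)$ (which forces a unique counit up to rescaling), then determine the counit and comultiplication essentially uniquely, force $c(g,h) \in \C^*$, and after fixing the counit normalisation deliver $c(g,h) \in U(1)$.

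Finally, I would match isomorphic separable Frobenius structures with cohomologous cocycles. Any even unit-preserving algebra isomorphism $\phi : (A_G, \mu_G^c) \to (A_G, \mu_G^{c'})$ must respect the direct-sum decomposition by the one-dimensional Hom hypothesis, hence takes the form $\phi = \bigoplus_g \alpha(g) \cdot \id_{{}_g I}$ for some $\alpha : G \to U(1)$ with $\alpha(e) = 1$; writing out the intertwining condition $\phi \circ \mu_G^c = \mu_G^{c'} \circ (\phi \otimes \phi)$ yields precisely the coboundary relation $c(g,h) = \tfrac{\alpha(g)\alpha(h)}{\alpha(gh)} c'(g,h)$. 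The main obstacle in executing this cleanly is the bookkeeping of how ${}_g(\lambda_{{}_h I})$ composes inside the associativity square, since the $g$-twists act on nested tensor products of $2\times 2$ matrices and could a priori introduce extra Koszul signs; however, the associativity of $\lambda_I$ established in \cite{cm1208.1481, cr1210.6363} guarantees that the same canonical generator appears on both sides of the equation, so these signs cancel and the cocycle condition emerges directly from the scalar coefficients.
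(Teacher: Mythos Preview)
Your proposal is correct and follows essentially the same route as the paper: parametrise multiplication and comultiplication by scalar functions via the one-dimensional Hom hypothesis, extract the cocycle condition from (co)associativity, and identify coboundaries with diagonal automorphisms $\bigoplus_g \alpha(g)\cdot 1_{{}_g I}$. One small correction: fixing the counit normalisation does \emph{not} force $c(g,h)\in U(1)$; the argument only yields a $\C^*$-valued 2-cocycle modulo $\C^*$-coboundaries, and the passage to $H^2(G,U(1))$ relies on the standard isomorphism $H^2(G,\C^*)\cong H^2(G,U(1))$ for finite~$G$ (a point the paper also leaves implicit).
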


\begin{proof}
By assumption, the space of degree zero maps in $\Hom({}_g I \otimes {}_h I,{}_k I) \cong \Hom(I, {}_{h^{-1}g^{-1}k} I)$ is one-dimensional if $k=gh$, and trivial otherwise. Any map $A_G \otimes A_G \rightarrow A_G$ is then of the form $\sum_{g,h \in G} c(g,h) \cdot \mu_{g,h}$ where~$c$ is any complex function on $G\times G$ and $\mu_{g,h}: {}_g I \otimes {}_h I \rightarrow {}_{gh} I$.   In particular, we can take $\mu_{g,h} = {}_g(\lambda_{{}_h I})$. Similarly, the comultiplication must be of the form $\tfrac{1}{|G|} \sum_{g,h \in G} \widetilde c(g,h) {}_g(\lambda^{-1}_{{}_h I})$ for some complex function~$\widetilde c$. 

The remainder of the proof proceeds as that of \cite[Prop.\,7.1(i)]{cr1210.6363}, with the following additions: 
(co)unitality forces $c(g,e)=c(e,g)=\widetilde c(g,e)=\widetilde c(e,g)=1$; the Frobenius property leads in particular to $c(g,h) \widetilde c(h, k) = c(g,hk) \widetilde c (gh,k)$, hence $h=e$ yields $\widetilde c(g,k) = c^{-1}(g,k)$; separability is then automatic; (co)associativity leads to the cocycle condition~\eqref{eq:cinH2}; the independence of the choice of representative of the cohomology class is due to the fact that every Frobenius morphism is an isomorphism, and isomorphisms ${}_g I \rightarrow {}_h I$ are rescalings of the form $\delta_{g,h} \alpha(g)$, so isomorphic Frobenius structures differ by $\tfrac{\alpha(g) \alpha(h)}{\alpha(gh)}$. 
\end{proof}

The assumption of Proposition~\ref{prop:AGcsFA} is in particular satisfied if $A_G$ is a graded matrix factorisations with $0<|x_i|<1$ for all variables $x_i$~\cite[App.\,A]{br0707.0922}. We will write $A_G^c$ for the defect $A_G$ if we want to stress that its algebra structure is that coming from $c \in H^2(G,U(1))$. Note that thanks to Proposition~\ref{prop:AGcsFA} all the general results of \cite{cr1210.6363, BCP1} hold for the $A_G^c$-orbifold theory; some of these will be discussed in the next sections, along with additional results.

\subsection{Bulk sector}\label{subsec:bulksec}

As first observed in \cite{kr0405232}, the bulk space of states of an unorbifolded Landau-Ginzburg model is isomorphic to the space of endomorphisms of the identity defect. This is true much more generally. In our present case it translates into the fact that the bulk space (of (c,c) fields) in the discrete torsion orbifold theory is given by $\End_{\AGC,\AGC}(\AGC)$, i.\,e.~endomorphisms of $\AGC$ viewed as a bimodule over itself. $\End_{\AGC,\AGC}(\AGC)$ is simply made up of those maps $\phi \in \End(\AGC)$ in the unorbifolded theory which commute with $\mu_G^c$, 
$$
\begin{tikzpicture}[very thick,scale=0.75,color=green!50!black, baseline=0.4cm]
\draw[-dot-] (3,0) .. controls +(0,1) and +(0,1) .. (2,0);
\draw (2.5,0.75) -- (2.5,1.5); 
\fill (2,0) circle (2.5pt) node[left] (alpha) {{\small $\phi$}};
\draw (2,0) -- (2,-0.5); 
\draw (3,0) -- (3,-0.5); 
\end{tikzpicture} 
=
\begin{tikzpicture}[very thick,scale=0.75,color=green!50!black, baseline=0.4cm]
\draw[-dot-] (3,0) .. controls +(0,1) and +(0,1) .. (2,0);
\draw (2.5,0.75) -- (2.5,1.5); 
\fill (2.5,1.15) circle (2.5pt) node[left] (alpha) {{\small $\phi$}};
\draw (2,0) -- (2,-0.5); 
\draw (3,0) -- (3,-0.5); 
\end{tikzpicture} 
=
\begin{tikzpicture}[very thick,scale=0.75,color=green!50!black, baseline=0.4cm]
\draw[-dot-] (3,0) .. controls +(0,1) and +(0,1) .. (2,0);
\draw (2.5,0.75) -- (2.5,1.5); 
\fill (3,0) circle (2.5pt) node[right] (beta) {{\small $\phi$}};
\draw (2,0) -- (2,-0.5); 
\draw (3,0) -- (3,-0.5); 
\end{tikzpicture} 
\, . 
$$
Here and below all trivalent vertices on $\AGC$-lines are secretly labelled by the (co)multiplication of $\AGC$. 

As explained in \cite[Sect.\,3.2]{BCP1} an equivalent way to describe $\End_{\AGC,\AGC}(\AGC)$ is as the image of a certain projector $\pi_{\AGC}$ on $\Hom(I,\AGC)$, i.\,e.~on the space of junction fields sitting at one end of the defect $\AGC$. 
The projector is 
\be\label{eq:piAGc}
\pi_{\AGC} : 
\Hom(I,\AGC) = \bigoplus_{g \in G} \Hom(I,{}_g I) \ni 
\begin{tikzpicture}[very thick,scale=0.75,color=green!50!black, baseline]
\fill (0,-0.5) circle (2.5pt) node[left] (D) {{\small $\alpha$}};
\draw (0,-0.5) -- (0,0.6); 
\end{tikzpicture} 
\lmt 
\begin{tikzpicture}[very thick,scale=0.75,color=green!50!black, baseline]
\draw (0,0) -- (0,1.25);
\fill (0,0) circle (2.5pt) node[left] {{\small $\alpha$}};
\draw (0,0.8) .. controls +(-0.9,-0.3) and +(-0.9,0) .. (0,-0.8);
\draw (0,-0.8) .. controls +(0.9,0) and +(0.7,-0.1) .. (0,0.4);
\fill (0,-0.8) circle (2.5pt) node {};
\fill (0,0.4) circle (2.5pt) node {};
\fill (0,0.8) circle (2.5pt) node {};
\draw (0,-1.2) node[Odot] (unit) {};
\draw (0,-0.8) -- (unit);
\end{tikzpicture}
\in \Hom(I,\AGC)
\ee
where 
$
\begin{tikzpicture}[thick,scale=0.4,color=green!50!black, baseline=-0.2cm]
\draw (0,-0.5) node[Odot] (D) {}; 
\draw (D) -- (0,0.3); 
\end{tikzpicture} 
$
is the inclusion of~$I$ into $\AGC$. 
In this presentation of the bulk space the twisted sectors $\pi_{\AGC} (\Hom(I,{}_g I))$ are manifest. 

\medskip

We can recover the conventional description of discrete torsion orbifolds of \cite{IntriligatorVafa1990} by computing the orbifold projection of $\pi_{\AGC}$. According to \eqref{eq:piAGc} the image of a field $\alpha_h \in \Hom(I,{}_h I)$ in the~$h$-twisted sector is the sum over all $g\in G$ of 
\be\label{eq:hwrapsg}
\begin{tikzpicture}[very thick,scale=1.35,color=green!65!black, baseline]
\draw (0,-0.3) -- (0,1.25);
\fill (0,-0.3) circle (1.75pt) node[left] {{\small $\alpha_h$}};
\draw (0,0.8) .. controls +(-0.9,-0.3) and +(-0.9,0) .. (0,-0.8);
\draw (0,-0.8) .. controls +(0.9,0) and +(0.7,-0.1) .. (0,0.4);
\fill (0,-0.8) circle (1.75pt) node {};
\fill (0,0.4) circle (1.75pt) node {};
\fill (0,0.8) circle (1.75pt) node {};
\draw (0,-1.2) node[Odot] (unit) {};
\draw (0,-0.8) -- (unit);
\fill (-0.59,0.1) circle (0pt) node[left] {{\tiny $g\vphantom{g^{-1}}$}};
\fill (0.29,0.1) circle (0pt) node[left] {{\tiny $h\vphantom{g^{-1}}$}};
\fill (1.1,0.1) circle (0pt) node[left] {{\tiny $g^{-1}$}};
\fill (0.67,0.65) circle (0pt) node[left] {{\tiny $hg^{-1}$}};
\fill (0.78,1.15	) circle (0pt) node[left] {{\tiny $ghg^{-1}$}};
\fill (0.28,-0.98) circle (0pt) node[left] {{\tiny $e\vphantom{g^{-1}}$}};
\fill[color=gray] (-0.6,-0.95) circle (0pt) node {{\tiny $c(g,g^{-1})^{-1}$}};
\fill[color=gray] (0.565,0.88) circle (0pt) node {{\tiny $c(g,hg^{-1})$}};
\fill[color=gray] (0.52,0.48) circle (0pt) node {{\tiny $c(h,g^{-1})$}};
\end{tikzpicture}
\ee
where in grey we have indicated the additional factors that arise if $c\neq 1$. 
The expression~\eqref{eq:hwrapsg} is to be interpreted as `wrapping the symmetry defect~${}_g I$ around the bulk field~$\alpha_h$'. This is the same as the action of~$g$ on~$\alpha_h$, and we read off from~\eqref{eq:hwrapsg} that it differs from the case without discrete torsion by a factor 
$$
\varepsilon (g,h) 
= 
\frac{c(g,hg^{-1}) \, c(h,g^{-1})}{c(g,g^{-1})}
\, . 
$$
Using~\eqref{eq:cinH2} and $c(k,e)=1$ for all $k\in G$ we find that 
$$
\varepsilon (g,h) 
= 
\frac{c(g,h) \, c(gh,g^{-1})}{c(g,g^{-1})} = \frac{c(g,h) \, c(g^{-1},g)}{c(g,g^{-1}) \, c(ghg^{-1},g)}=\frac{c(g,h)}{c(ghg^{-1},g)} \, .
$$
We thus recover the familiar relation $\varepsilon (g,h) = \tfrac{c(g,h)}{c(h,g)}$ for abelian~$G$. 
An analogous computation shows that 
the projector on RR ground states~\cite[Sect.\,3.2]{BCP1} is modified by the same factor of $\varepsilon(g,h)$.
Here we are led to obtain it as a consequence of the general theory of \cite{cr1210.6363, BCP1} and the single choice of fundamental defect $\AGC$. 

\begin{remark}\label{rem:AGkappa}
For a complete comparison with the conventional description of~\cite{IntriligatorVafa1990}, we also need to discuss the action of the orbifolding defect $A_{G,\kappa}^c = \bigoplus_{g\in G} {}_g I [\kappa_g]$. First, we note that $A_{G,\kappa}^c$ carries the structure of a separable Frobenius algebra with (co)multiplication defined by
\be
\mu = \sum_{g,h \in G} c(g,h) \cdot {}_g (\lambda_{{}_h I[\kappa_h]})[\kappa_g] 
\, , \quad 
\Delta = \frac{1}{|G|} \sum_{g,h \in G} c(g,h)^{-1} \cdot {}_g (\lambda_{{}_h I[\kappa_h]})^{-1}[\kappa_g] 
\ee
and (co)unit the same as for $A_G$. 
The Nakayama automorphism (see~\eqref{eq:Nakayama}) is 
\be\label{eq:NakayamaAGkappa}
\gamma_{A_{G,\kappa}^c}=\sum_{g\in G} (-1)^{\kappa_g} \det(g) \cdot 1_{{}_g I [\kappa_g] } \, . 
\ee
The necessary checks proceed along the lines of \cite[Prop.\,7.1(i)]{cr1210.6363} and \cite[Ex.\,3.1]{BCP1}; in particular associativity of~$\mu$ forces~$\kappa: G\rightarrow \Z_2$ to be a group homomorphism.

Unprojected twisted fields in the $A_{G,\kappa}^c$-orbifold theory are described by the space $\Hom(I,A_{G,\kappa}^c)$. Compared with $\Hom(I,A_G^c)$, which as a vector space is isomorphic to $\Hom(I,A_{G,\kappa}^c)$, we see that $\kappa \neq 0$ has the effect of shifting the $\Z_2$-degree of a field in the $g$-twisted sector by $\kappa_g$. Similarly, the orbifold projectors on (c,c) fields and RR ground states are decorated with additional signs originating from the $\kappa$-shifts. Computing the action of $g\in G$ on an $h$-twisted field along the lines of~\cite[App.\,A.2]{BCP1}, we obtain the sign $(-1)^{\kappa_g \kappa_h}$ for the (c,c)-projector and $(-1)^{\kappa_g(1+\kappa_h)}$ for the RR-projector (using~\eqref{eq:NakayamaAGkappa} for the latter). 
The action of $g\in G$ on untwisted RR ground states is modified by $(-1)^{\kappa_g}$, so the choice of~$\kappa$ indeed corresponds to a choice of a sign in the $G$-action on RR ground states.

In summary, in the $A_{G,\kappa}^c$-orbifold theory we precisely recover the results of \cite{IntriligatorVafa1990}:
\begin{align*}
\rho_{\text{(c,c)}} (g) \prod_{\Theta_i^h \in \Z}  x_i^{l_i} | 0 \rangle_{\text{(c,c)}}^h 
& = (-1)^{\kappa_g \kappa_h} \, \varepsilon (g,h) \, \frac{\det (g|_h)}{\det (g)} \, \E^{2\pi\I \sum_{\Theta_i^h \in \Z} \Theta_i^g l_i} \prod_{\Theta_i^h \in \Z} x_i^{l_i} | 0 \rangle_{\text{(c,c)}}^h \, , \\ 
\rho_{\text{RR}} (g) \prod_{\Theta_i^h \in \Z} x_i^{l_i} | 0 \rangle_{\text{RR}}^h
& = (-1)^{\kappa_g (\kappa_h+1)} \, \varepsilon (g,h) \, \det (g|_h) \, \E^{2\pi\I \sum_{\Theta_i^h \in \Z} \Theta_i^g l_i} \prod_{\Theta_i^h \in \Z} x_i^{l_i} | 0 \rangle_{\text{RR}}^h \, .
\end{align*}
\end{remark}

\subsection{Boundary sector}\label{subsec:boundarysec}

For a given orbifold defect~$A$, the boundary sector of the associated orbifold theory is given by the category of $A$-modules. In the special case of Landau-Ginzburg models with $A = A_G$ as in~\eqref{eq:AG} it was shown in \cite[Sect.\,7.1]{cr1210.6363} that these are precisely $G$-equivariant matrix factorisations, 
$$
\modu(A_G) \cong \hmf( \C[x], W )^G \, . 
$$

We shall now obtain the analogous result for discrete torsion orbifolds, where $A = \AGC$. By considering $\AGC$-modules we will see how we are naturally led to projective $G$-representations in our setting. More precisely, for $c \in H^2(G,U(1))$ we define the category $\hmf( \C[x], W )^{G,c}$ of \textsl{$c$-projective ($G$-equivariant) matrix factorisations} as follows. Objects are finite-rank matrix factorisations~$Q$ of~$W$, together with a set of isomorphisms $\{ \gamma_g : {}_g Q \rightarrow Q \}_{g \in G}$ such that $\gamma_e = 1$ and $\gamma_g \circ {}_g (\gamma_h) = c(g,h)\cdot \gamma_{gh}$ for all $g,h\in G$. If~$\gamma_h$ does not depend on the $x'$-variables (which is typically the case), then the latter condition simply reads
$$
\gamma_g \circ \gamma_h = c(g,h)\cdot \gamma_{gh} \, . 
$$
Thus for $c\neq 1$ the factorisation~$Q$ does not carry a linear, but a projective representation of~$G$. Morphisms $\Phi: Q \rightarrow P$ in $\hmf( \C[x], W )^{G,c}$ however only depend indirectly on~$c$: they are maps of matrix factorisations~$\Phi$ with the additional condition
$$
\Phi = \gamma_g^{(P)} \circ {}_g \Phi \circ (\gamma_g^{(Q)})^{-1} 
\quad \text{for all } g\in G. 
$$
Thus we have 
$$
\hmf( \C[x], W )^{G,c} = \hmf( \C[x], W )^{G} 
\quad \text{for } c = 1. 
$$

We can now show that boundary conditions in discrete torsion Landau-Ginzburg orbifolds are precisely $c$-projective matrix factorisations. 

\begin{theorem} \label{thm:modAGc}
$\modu(\AGC) \cong \hmf( \C[x], W )^{G,c}$. 
\end{theorem}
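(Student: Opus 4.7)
The plan is to mirror the proof of the non-projective analogue $\modu(A_G) \cong \hmf(\C[x], W)^G$ given in \cite[Sect.\,7.1]{cr1210.6363}, with the cocycle~$c$ entering only through the modified multiplication~\eqref{eq:muGc}. I would construct mutually inverse functors. In one direction, an $\AGC$-module consists of a matrix factorisation~$X$ of~$W$ together with an action $\rho: \AGC \otimes X \to X$. Using $\AGC = \bigoplus_{g\in G} {}_g I$, decompose $\rho = \sum_g \rho_g$ with $\rho_g: {}_g I \otimes X \to X$, and set $\gamma_g := \rho_g \circ ({}_g\lambda_X)^{-1} : {}_g X \to X$, where ${}_g\lambda_X$ is the $g$-twist of the canonical left action of the identity defect on~$X$. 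Each $\gamma_g$ is then a closed morphism of matrix factorisations.

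Second, translate the module axioms. Unitality with respect to the unit $I\hookrightarrow \AGC$ forces $\gamma_e = \id$. Associativity $\rho \circ (\mu_G^c \otimes 1_X) = \rho \circ (1_{\AGC} \otimes \rho)$, together with the explicit form~\eqref{eq:muGc}, yields after passing through the identifications the projective cocycle relation
$$
\gamma_g \circ {}_g(\gamma_h) = c(g,h) \cdot \gamma_{gh} \, .
$$
Since $c(g,g^{-1}) \neq 0$, each $\gamma_g$ is automatically an isomorphism, producing an object of $\hmf(\C[x], W)^{G,c}$. The reverse functor sets $\rho_g := \gamma_g \circ {}_g\lambda_X$ and assembles $\rho = \sum_g \rho_g$; the projective cocycle relation combined with~\eqref{eq:muGc} makes associativity hold, and $\gamma_e = \id$ yields unitality. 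These constructions are manifestly mutually inverse on objects.

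For morphisms, an $\AGC$-module map $\Phi: X \to Y$ satisfies $\Phi \circ \rho^X = \rho^Y \circ (1_{\AGC} \otimes \Phi)$. Decomposing into $g$-components and transporting along the identifications gives $\Phi = \gamma_g^Y \circ {}_g\Phi \circ (\gamma_g^X)^{-1}$ for all $g\in G$, which is exactly the morphism condition in $\hmf(\C[x], W)^{G,c}$. Functoriality (composition and identities) then follows immediately.

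The main obstacle is careful bookkeeping: verifying that $g$-twisting intertwines correctly with the identity-defect left action, tracking where the cocycle factor $c(g,h)$ enters on both sides of associativity, and handling the general case in which $\gamma_g$ depends on the $x'$-variables (so that the projective condition reads $\gamma_g \circ {}_g(\gamma_h) = c(g,h)\gamma_{gh}$ rather than its simplified form). The normalisation $c(g,e) = c(e,g) = 1$ established in the proof of Proposition~\ref{prop:AGcsFA}, together with the cocycle condition~\eqref{eq:cinH2}, ensures that all consistency relations match without further input.
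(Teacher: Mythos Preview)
Your proposal is correct and follows essentially the same approach as the paper. The paper presents the associativity step via a single large commutative diagram whose subdiagrams commute by definition or by naturality of~$\lambda$, making precise exactly the ``careful bookkeeping'' you flag as the main obstacle; otherwise the construction of mutually inverse functors via $\rho_g = \gamma_g \circ {}_g(\lambda_Q)$ and the translation of module axioms into the projective cocycle relation are identical.
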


\begin{proof}
The argument is similar to the proof of \cite[Thm.\,7.2]{cr1210.6363}. Given a $c$-projective matrix factorisation $(Q, \{ \gamma_g \})$, we define $\rho_g = \gamma_g \circ {}_g (\lambda_Q)$ and $\rho = \sum_{g\in G} \rho_g : A_G \otimes Q \rightarrow Q$. Then $(Q,Ê\rho)$ is an $\AGC$-module iff for $C_{g,h}=1$ the diagram 
\be\label{eq:bigCdiagram}
\begin{tikzpicture}[
			     baseline=(current bounding box.base), 
			     >=stealth,
			     descr/.style={fill=white,inner sep=2.5pt}, 
			     normal line/.style={->}
			     ] 
\fill (1.75,0.95) circle (0pt) node (alpha) {{\small $\text{nat.}$}};
\fill (0,2.95) circle (0pt) node (alpha) {{\small $\text{def.}$}};
\fill (-1.75,0.0) circle (0pt) node (alpha) {{\small $D_1$}};
\fill (-4.25,0.0) circle (0pt) node (alpha) {{\small $\text{def.}$}};
\fill (2.5,-1.5) circle (0pt) node (alpha) {{\small $D_2$}};
\fill (0,-2.85) circle (0pt) node (alpha) {{\small $\text{def.}$}};
\fill (4.5,0.0) circle (0pt) node (alpha) {{\small $\text{def.}$}};
\matrix (m) [matrix of math nodes, row sep=3em, column sep=2.5em, text height=1.5ex, text depth=0.25ex] {%
{}_g I \otimes {}_h I \otimes Q &&&&&& {}_g I \otimes Q \\
&&& {}_g I \otimes {}_h Q &&& \\
&&&&& {}_g Q & \\
&&& {}_{gh} Q &&& \\
{}_{gh} I \otimes Q &&&&&& Q \\
};
\path[font=\scriptsize] (m-1-1) edge[->] node[auto] {$ 1_{{}_g I} \otimes \rho_h $} (m-1-7)
				  (m-1-1) edge[->] node[above, sloped] {$ 1_{{}_g I} \otimes {}_h (\lambda_Q) $} (m-2-4)
				  (m-1-1) edge[->] node[below, swap, sloped] {$ = {}_g (1_I \otimes {}_h(\lambda_Q)) $} (m-2-4)
				  (m-1-1) edge[->] node[below, sloped] {$ c(g,h) \cdot {}_g (\lambda_{{}_h I}) \otimes 1_Q $} (m-5-1)
				  (m-1-1) edge[->, out = -45, in=45] node[above, sloped] {$ c(g,h) \cdot {}_g (\lambda_{{}_h I} \otimes 1_Q )$} (m-5-1);
\path[font=\scriptsize] (m-2-4) edge[->] node[above, sloped] {$ 1_{{}_g I} \otimes \gamma_h $} (m-1-7)
				  (m-2-4) edge[->] node[below, sloped] {$ = {}_g (1_I \otimes \gamma_h) $} (m-1-7)
				  (m-2-4) edge[->] node[below, sloped] {$ {}_g (\lambda_{{}_h Q})$} (m-4-4);
\path[font=\scriptsize] (m-1-7) edge[->, out = -120, in=90] node[above, sloped] {$ {}_g (\lambda_Q) $} (m-3-6)
				  (m-1-7) edge[->] node[above, sloped] {$ \rho_g $} (m-5-7);
\path[font=\scriptsize] (m-3-6) edge[->, out = -90, in=120] node[above, sloped] {$ \gamma_g $} (m-5-7);
\path[font=\scriptsize] (m-4-4) edge[->] node[above, sloped] {$ {}_g (\gamma_h) $} (m-3-6)
				  (m-4-4) edge[->] node[above, sloped] {$ C_{g,h} \cdot \gamma_{gh} $} (m-5-7);
\path[font=\scriptsize] (m-5-1) edge[->] node[above, sloped] {$ {}_g ( {}_h (\lambda_Q)) $} (m-4-4)
				  (m-5-1) edge[->] node[above, sloped] {$ C_{g,h} \cdot \rho_{gh} $} (m-5-7);
\end{tikzpicture}
\ee
commutes. The subdiagrams indicated commute either by definition or by naturality of~$\lambda$. Thus~\eqref{eq:bigCdiagram} commutes iff the diagram obtained by setting the factors $c(g,h)$ on the two leftmost arrows to~1 while setting $C_{g,h} = c(g,h)$. This is true since the subdiagrams~$D_1$ and~$D_2$ then commute by naturality and by definition, respectively, and the leftmost and bottom subdiagrams continue to commute. The rest of the proof is exactly as that of \cite[Thm.\,7.2]{cr1210.6363}. 
\end{proof}

Again, the general theory of \cite{cr1210.6363, BCP1} applies. In particular it immediately follows that the Cardy condition 
in the form of \cite[Prop.\,3.16]{BCP1}
holds in $\hmf( \C[x], W )^{G,c}$: 
for all $Q,P \in \hmf( \C[x], W )^{G,c}$ and all maps $\Phi \in \End(Q)$, $\Psi \in \End(P)$ the overlap of the generalised boundary states $\beta^Q(\Phi)$ and $\beta^P(\Psi)$ equals the trace of over the operator ${}_\Psi m_{\Phi} = \Psi\circ (-) \circ \Phi$ on $\Hom(Q,P)$: 
$$
\big\langle \beta^Q(\Phi) , \beta^P(\Psi) \big\rangle = \operatorname{tr} ({}_\Psi m_{\Phi}) \, . 
$$
Furthermore, if $\AGC$ is symmetric the category $\hmf( \C[x], W )^{G,c}$ 
is Calabi-Yau with an explicitly known nondegenerate Serre pairing
$\langle -,- \rangle$; see \cite[Sect.\,3.3]{BCP1} for more details. 

\medskip

The question naturally arises what the relation between equivariant matrix factorisations with linear and with projective group actions is. In general they are not in one-to-one correspondence, but there are canonical ways to produce one from the other, as we explain in Example~\ref{ex:lineartoprojective} below and the text preceding it. This is easily done in terms of general defects to which we now turn.

\subsection{Defect sector}\label{subsec:defectsec}

A defect~$X$ between two orbifold theories built from two defects~$A$ and~$B$ is described by an $A$-$B$-bimodule. This means that there are two trivalent junction fields which give (unital) commuting left $A$- and right $B$-actions on~$X$, compatible with $A$- and $B$-multiplication: 
$$
\begin{tikzpicture}[very thick,scale=0.75,color=blue!50!black, baseline]
\draw (0,-1) node[below] (X) {{\small$X$}};
\draw[color=green!50!black] (-0.5,-1) node[below] (X) {{\small$A$}};
\draw[color=green!50!black] (-1,-1) node[below] (X) {{\small$A$}};
\draw (0,1) node[right] (Xu) {};
\draw (0,-1) -- (0,1); 
\fill[color=green!50!black] (0,-0.25) circle (2.9pt) node (meet) {};
\fill[color=green!50!black] (0,0.75) circle (2.9pt) node (meet2) {};
\draw[color=green!50!black] (-0.5,-1) .. controls +(0,0.25) and +(-0.25,-0.25) .. (0,-0.25);
\draw[color=green!50!black] (-1,-1) .. controls +(0,0.5) and +(-0.5,-0.5) .. (0,0.75);
\end{tikzpicture} 
=
\begin{tikzpicture}[very thick,scale=0.75,color=blue!50!black, baseline]
\draw (0,-1) node[below] (X) {{\small$X$}};
\draw[color=green!50!black] (-0.5,-1) node[below] (X) {{\small$A$}};
\draw[color=green!50!black] (-1,-1) node[below] (X) {{\small$A$}};
\draw (0,1) node[right] (Xu) {};
\draw (0,-1) -- (0,1); 
\fill[color=green!50!black] (0,0.75) circle (2.9pt) node (meet2) {};
\draw[-dot-, color=green!50!black] (-0.5,-1) .. controls +(0,1) and +(0,1) .. (-1,-1);
\draw[color=green!50!black] (-0.75,-0.2) .. controls +(0,0.5) and +(-0.5,-0.5) .. (0,0.75);
\end{tikzpicture} 
\, , \quad
\begin{tikzpicture}[very thick,scale=0.75,color=blue!50!black, baseline]
\draw (0,-1) node[below] (X) {{\small$X$}};
\draw[color=green!50!black] (0.5,-1) node[below] (X) {{\small$B$}};
\draw[color=green!50!black] (1,-1) node[below] (X) {{\small$B$}};
\draw (0,1) node[left] (Xu) {};
\draw (0,-1) -- (0,1); 
\fill[color=green!50!black] (0,0.75) circle (2.9pt) node (meet2) {};
\draw[-dot-, color=green!50!black] (0.5,-1) .. controls +(0,1) and +(0,1) .. (1,-1);
\draw[color=green!50!black] (0.75,-0.2) .. controls +(0,0.5) and +(0.5,-0.5) .. (0,0.75);
\end{tikzpicture} 
= 
\begin{tikzpicture}[very thick,scale=0.75,color=blue!50!black, baseline]
\draw (0,-1) node[below] (X) {{\small$X$}};
\draw[color=green!50!black] (0.5,-1) node[below] (X) {{\small$B$}};
\draw[color=green!50!black] (1,-1) node[below] (X) {{\small$B$}};
\draw (0,1) node[left] (Xu) {};
\draw (0,-1) -- (0,1); 
\fill[color=green!50!black] (0,-0.25) circle (2.9pt) node (meet) {};
\fill[color=green!50!black] (0,0.75) circle (2.9pt) node (meet2) {};
\draw[color=green!50!black] (0.5,-1) .. controls +(0,0.25) and +(0.25,-0.25) .. (0,-0.25);
\draw[color=green!50!black] (1,-1) .. controls +(0,0.5) and +(0.5,-0.5) .. (0,0.75);
\end{tikzpicture} 
, \quad
\begin{tikzpicture}[very thick,scale=0.75,color=blue!50!black, baseline]
\draw (0,-1) node[below] (X) {{\small$X$}};
\draw[color=green!50!black] (0.5,-1) node[below] (X) {{\small$B$}};
\draw[color=green!50!black] (-1,-1) node[below] (X) {{\small$A$}};
\draw (0,1) node[left] (Xu) {};
\draw (0.5,-1) node[right] (A) {};
\draw (-1,-1) node[left] (B) {};
\draw (0,-1) -- (0,1); 
\fill[color=green!50!black] (0,-0.3) circle (2.9pt) node (meet) {};
\fill[color=green!50!black] (0,0.3) circle (2.9pt) node (meet) {};
\draw[color=green!50!black] (-1,-1) .. controls +(0,0.5) and +(-0.5,-0.5) .. (0,0.3);
\draw[color=green!50!black] (0.5,-1) .. controls +(0,0.25) and +(0.5,-0.5) .. (0,-0.3);
\end{tikzpicture} 
\!\!
=
\!\!
\begin{tikzpicture}[very thick,scale=0.75,color=blue!50!black, baseline]
\draw (0,-1) node[below] (X) {{\small$X$}};
\draw[color=green!50!black] (-0.5,-1) node[below] (X) {{\small$A$}};
\draw[color=green!50!black] (1,-1) node[below] (X) {{\small$B$}};
\draw (0,1) node[right] (Xu) {};
\draw (1.0,-1) node[right] (A) {};
\draw (-0.5,-1) node[left] (B) {};
\draw (0,-1) -- (0,1); 
\fill[color=green!50!black] (0,-0.3) circle (2.9pt) node (meet) {};
\fill[color=green!50!black] (0,0.3) circle (2.9pt) node (meet) {};
\draw[color=green!50!black] (-0.5,-1) .. controls +(0,0.25) and +(-0.5,-0.5) .. (0,-0.3);
\draw[color=green!50!black] (1.0,-1) .. controls +(0,0.5) and +(0.5,-0.5) .. (0,0.3);
\end{tikzpicture} 
.
$$

In particular, for two parent Landau-Ginzburg models with potentials $W,W'$ and respective symmetry groups $G,G'$, a defect between the discrete torsion orbifolds associated to $c \in H^2(G,U(1))$, $c' \in H^2(G',U(1))$ is an $A_{G'}^{c'}$-$A_G^c$-bimodule. Analogously to the case of boundary conditions (which is in fact the special case $A_{G}^{c} = A_{\{ e \}}^1$ with $W=0$) one finds that $A_{G'}^{c'}$-$A_G^c$-bimodules are equivalent to $(G' \times G)$-equivariant matrix factorisations of $W'
-W$ with~$c'$- and $c$-projective representations of~$G'$ and~$G$, respectively. 

\medskip

According to \cite{ffrs0909.5013, cr1210.6363} defect fusion in orbifold theories is described by the tensor product over the intermediate algebra. Given three orbifolding defects $A,B,C$, an $A$-$B$-bimodule~$X$ and a $B$-$C$-bimodule~$Y$, the fusion of the corresponding defects is described by the $A$-$C$-bimodule $X \otimes_B Y$. Under mild assumptions (which are always satisfied for matrix factorisations) the latter can be computed via the projector 
\be\label{eq:tensorproj}
\pi = 
\begin{tikzpicture}[very thick,scale=0.75,color=blue!50!black, baseline]

\draw (-1,-1) node[left] (X) {{\small$X$}};
\draw (1,-1) node[right] (Xu) {{\small$Y$}};

\draw (-1,-1) -- (-1,1); 
\draw (1,-1) -- (1,1); 

\fill[color=green!50!black] (-1,0.6) circle (2.5pt) node (meet) {};
\fill[color=green!50!black] (1,0.6) circle (2.5pt) node (meet) {};

\draw[-dot-, color=green!50!black] (0.35,-0.0) .. controls +(0,-0.5) and +(0,-0.5) .. (-0.35,-0.0);

\draw[color=green!50!black] (0.35,-0.0) .. controls +(0,0.25) and +(-0.25,-0.25) .. (1,0.6);
\draw[color=green!50!black] (-0.35,-0.0) .. controls +(0,0.25) and +(0.25,-0.25) .. (-1,0.6);

\draw[color=green!50!black] (0,-0.75) node[Odot] (down) {}; 
\draw[color=green!50!black] (down) -- (0,-0.35); 

\draw[color=green!50!black] (0.1,0.55) node[left] (X) {{\small$B$}};

\end{tikzpicture} 
: X \otimes Y \lra X \otimes Y
\ee
and the splitting and projection maps $\xi : X \otimes_B Y \rightarrow X \otimes Y$ and $\vartheta: X \otimes Y \rightarrow X \otimes_B Y$ such that $\xi \vartheta = \pi$ and $\vartheta \xi = 1$. 
Here, $X\otimes Y \equiv X \otimes_{I_W} Y$ is the fusion in the unorbifolded theory, and junction fields on $X \otimes_B Y$ are obtained by the projection $\Phi \otimes \Psi \mapsto \Phi \otimes_B \Psi = \vartheta (\Phi \otimes \Psi) \xi$.

It follows that defect fusion can be explicitly computed in discrete torsion Landau-Ginzburg orbifolds, as we have concrete expressions for all constituent maps in~\eqref{eq:tensorproj}. 
Furthermore, as first shown in \cite{br0707.0922} the fusion $X\otimes Y$ in the unorbifolded theory is equivalent to a finite-rank matrix factorisation~$Z$, 
and thanks to the main result of \cite{dm1102.2957} implemented algorithmically in \cite{khovhompaper}, $Z$ can be determined explicitly. 

As a consequence of the above observations, the general results on compatibility of defect actions on bulk fields with both fusion and sphere correlators also hold in discrete torsion orbifolds. For details we refer to \cite[Sect.\,3.4]{BCP1}. 

\medskip

There are natural ways to construct equivariant matrix factorisations with projective group actions from ordinary (possibly equivariant) ones, and vice versa. 
To see this, let us consider a $G$-equivariant matrix factorisation $(Q,\{ \gamma_{g} \})$ of a Landau-Ginzburg potential~$W$. 
For another potential~$W'$ with symmetry group~$G'$ and $c' \in H^2(G',U(1))$ we may consider the associated orbifold theory with discrete torsion. 
A defect~$X$ between the two theories is an $A_{G'}^{c'}$-$A_G$-bimodule, 
and fusing with this defect gives us a functor 
\be\label{eq:lineartoprojective}
X \otimes_{A_G} (-): 
\hmf( \C[x], W )^{G} \lra \hmf( \C[x'], W' )^{G',c'}
\ee
from matrix factorisations $(Q, \{\gamma_g\})$ with linear $G$-representations to those with projective $G'$-representations. 

More precisely, the new $G'$-equivariant matrix factorisation is 
$$
\widehat Q := X\otimes_{A_{G}} Q
\quad \text{with projective group action} \quad
\widehat \gamma_{g'} := \Gamma_{g'} \otimes_{A_{G}} 1_{Q}
$$
where by definition the latter matrix is computed from the projective $G'$-action $\Gamma_{g'}$ on~$X$ (extracted from the left action of $A_{G'}^{c'}$ on~$X$) by tensoring with~$1_{Q}$ and then pre- and post-composing with the splitting maps of the projector in~\eqref{eq:tensorproj} with $B = A_{G}$ and $Y = Q$. 
Note that the projective action~$\widehat \gamma_{g'}$ on~$\widehat Q$ depends implicitly on the linear action $\gamma_{g}$ on~$Q$ as the latter is part of the projector in~\eqref{eq:tensorproj}. 

In some cases the projective group action~$\widehat \gamma_{g'}$ can be computed very explicitly: 

\begin{example}\label{ex:lineartoprojective}
\begin{enumerate}
\item 
We keep with the above notation but now assume $W=W'$ and that~$G$ is trivial, i.\,e.~we start with  a plain, non-equivariant matrix factorisation $Q \in \hmf(\C[x], W)$. Furthermore, we choose $X = A_{G'}^{c'}$ viewed as an $A_{G'}^{c'}$-$I_W$-bimodule. Then one finds that the map~\eqref{eq:lineartoprojective} sends~$Q$ to its $G'$-orbit $(\widehat Q, \widehat \gamma_{g'}) \in \hmf( \C[x], W)^{G',c'}$ with the regular $c'$-projective $G'$-representation: 
\be\label{eq:projaction1}
\widehat Q = \bigoplus _{g' \in G'} {}_{g'} Q \, , 
\quad 
\widehat \gamma_{g'} = \sum_{g'' \in G'} c'(g',g'') \cdot \pi_{g'',g'g''}
\ee
where $\pi_{g'',g'g''} : {}_{g'} ({}_{g''} Q) \cong {}_{g'g''} Q$ simply permutes the summands of~$\widehat Q$. 
\item 
We keep $W=W'$ but relax the assumption on~$G$ such that $H^2(G,U(1)) = 0$ should hold, e.\,g.~$G= \Z_d$. Furthermore, we take~$G'$ to be of the form $H \times G$ for any symmetry group~$H$ of~$W$. By our assumption on~$G$ we have that $A_G^{c'} \cong A_G$ as an algebra for any $c' \in H^2( H \times G, U(1))$, so we can set $X = A_{H \times G}^{c'} \otimes_{A_G^{c'}} A_G \cong A_{H \times G}^{c'}$ as an $A_{H \times G}^{c'}$-$A_G$-bimodule. 
This maps $(Q, \{\gamma_g\}) \in \hmf(\C[x], W)^G$ to the induced $c'$-projective matrix factorisation 
\be\label{eq:projaction2}
\widehat Q = \bigoplus _{k \in H} {}_{k} Q \, , 
\quad 
\widehat \gamma_{(h,g)} = \sum_{k \in H} \frac{c'(g,hk)}{c'(hk,g) \, c'(g,h)} \,  {}_{hk} (\gamma_g) \circ {}_g(\pi_{k,hk})
\ee
in $\hmf( \C[x], W )^{G',c'}$, where $c'(g,h)$ is short for $c'\big( (e_H, g), (h, e_G) \big)$ etc. 
\end{enumerate}
Note that one can also directly verify that~\eqref{eq:projaction1} and~\eqref{eq:projaction2} furnish projective representations, but this is not necessary as it follows from the general defect discussion. 
\end{example}

In conclusion, we observe that \textsl{discrete torsion Landau-Ginzburg orbifolds are (generalised) orbifolds of Landau-Ginzburg orbifolds, and vice versa}. In fact, this result applies to a much larger class of two-dimensional quantum field theories,
and in Theorem~\ref{thm:orbiequi} below we prove this equivalence in an abstract setting that in particular covers all topological field theories with defects.\footnote{For rational conformal field theories an even stronger statement is true, as was first explained in \cite{ffrs0909.5013}, namely (under certain assumptions on the symmetry algebras, uniqueness of the vacuum and nondegeneracy of two-point functions) all theories of fixed central charge are orbifold equivalent.} 

In the case of Landau-Ginzburg models there are two implications of this general result that are worth mentioning. One is that 
$$
\hmf(\C[x], W)^{G,c} \cong \modu\! \big( A_G \otimes A_G^c \otimes A_G \big) 
$$
where the right-hand side only involves linear $G$-actions: $c$-projective matrix factorisations in $\hmf(\C[x], W)^{G,c}$ can equivalently be described as objects in $\hmf(\C[x], W)^{G}$ which come with the extra structure of being a module over the orbifolding defect $A_G \otimes A_G^c \otimes A_G$. 
This is a manifestation of the general fact that anything in a theory with discrete torsion can be described within the theory without discrete torsion. 

Another consequence of Theorem~\ref{thm:orbiequi} is that any unorbifolded Landau-Ginzburg model with symmetry group~$G$ is a (generalised) orbifold of its $G$-orbifold by the \textsl{quantum symmetry defect} 
$$
\mathcal A_G^{\text{qs}} = A_G \otimes A_G
$$
viewed as an equivariant matrix factorisation. In particular this implies
\be\label{eq:quantum-symmetry-orbifold}
\hmf(\C[x], W) \cong \modu ( 
\mathcal
A_G^{\text{qs}} ) \, . 
\ee
For abelian~$G$, results like this are well-known in 
the conventional approach to orbifolds in the CFT literature 
(see e.\,g.~\cite{ginsparg}). 
The corresponding orbifold defect was first considered in~\cite{br0712.0188}, and in our setting it can be written in terms of Nakayama twists, e.\,g.~
$$
\mathcal A_G^{\text{qs}} \cong \bigoplus_{h\in G} {}_{\gamma_{A_G}^h} \!\!(A_G)
\quad \text{for} \quad 
G = \Z_d
$$ 
(see Appendix~\ref{app:qsdefect} for details, and~\eqref{eq:Nakayama} for the definition of~$\gamma_{A_G}$).
We stress however that these results hold true even for nonabelian~$G$, and it is the defect $\mathcal A_G^{\text{qs}}$ that replaces the role of the quantum symmetry group in that case. 
Put differently, orbifolding by the \textsl{defect} $\mathcal A_G^{\text{qs}}$ may have nothing to do with an orbifold \textsl{group}: for nonabelian~$G$ there may not be any symmetry group~$H$ such that $\mathcal A_G^{\text{qs}}$ is isomorphic to $A_H$.

\section{Disc correlators}\label{sec:disccorrelators}
 
In this section we study properties of disc correlators which in particular simplify the computation of RR-charges. The general discussion is valid for any orbifold theory; we apply it to Landau-Ginzburg models in Example~\ref{ex:inducedcharges} and provide the corresponding CFT perspective in Appendix~\ref{app:CFT}. 

\subsection{Invariance and selection rule}

Correlators in orbifold theories based on a symmetry group $G$ are computed from correlators in the parent theory with twisted field insertions. Since the vacuum of the parent theory is invariant under $G$, the correlators must also be $G$-invariant. This condition has to be satisfied for both projected and unprojected twisted field insertions, and it provides a useful selection rule for the correlators. One can formulate an analogous notion of invariance also for defect orbifolds, which we discuss in the following for the case of disc correlators. 

We consider an orbifold theory constructed from a defect $A$, with a boundary condition given by an $A$-module $Q$. The disc correlator with bulk and boundary insertions $\alpha$ and $\psi$, respectively, is given by the diagram
\be\label{eq:orbiDisc}
\begin{tikzpicture}[very thick,scale=0.75,color=blue!50!black, baseline]
\nicepalecolourscheme (0,0) circle (1.5);
\draw (0,0) circle (1.5);
\fill (-45:1.55) circle (0pt) node[right] {{\small$Q$}};
\draw[<-, very thick] (0.100,-1.5) -- (-0.101,-1.5) node[above] {}; 
\draw[<-, very thick] (-0.100,1.5) -- (0.101,1.5) node[below] {}; 
\fill[color=green!50!black] (135:0) circle (2.5pt) node[right] {{\small$\alpha$}};
\fill (0:1.5) circle (2.5pt) node[left] {{\small$\psi$}};
\draw[color=green!50!black] (0,0) .. controls +(0,0.6) and +(-0.4,-0.4) .. (45:1.5);
\fill[color=green!50!black] (45:1.5) circle (2.5pt) node[right] {{}};
\end{tikzpicture} 
\equiv
\begin{tikzpicture}[very thick,scale=0.75,color=blue!50!black, baseline]
\nicepalecolourscheme (0,0) circle (1.5);
\draw (0,0) circle (1.5);
\fill (-45:1.55) circle (0pt) node[right] {{\small$Q$}};
\draw[<-, very thick] (0.100,-1.5) -- (-0.101,-1.5) node[above] {}; 
\draw[<-, very thick] (-0.100,1.5) -- (0.101,1.5) node[below] {}; 
\fill (0:1.5) circle (2.5pt) node[left] {{\small$\psi$}};
\fill (45:1.5) circle (2.5pt) node[left] {{\small$\beta_Q(\alpha)$}};
\end{tikzpicture} 
= \big\langle \beta_Q(\alpha) \circ \psi \big\rangle_Q
\ee
where $\beta_Q$ denotes the bulk-boundary map, see e.\,g.~\cite[Sect.\,3.3]{BCP1}. 
This correlator satisfies the property \cite[Prop.\,4.6]{cr1210.6363}
\be\label{eq:discinvariance}
\begin{tikzpicture}[very thick,scale=0.75,color=blue!50!black, baseline]
\nicepalecolourscheme (0,0) circle (1.5);
\draw (0,0) circle (1.5);
\fill (-45:1.55) circle (0pt) node[right] {{\small$Q$}};
\draw[<-, very thick] (0.100,-1.5) -- (-0.101,-1.5) node[above] {}; 
\draw[<-, very thick] (-0.100,1.5) -- (0.101,1.5) node[below] {}; 
\fill[color=green!50!black] (135:0) circle (2.5pt) node[right] {{\small$\alpha$}};
\fill (0:1.5) circle (2.5pt) node[left] {{\small$\psi$}};
\draw[color=green!50!black] (0,0) .. controls +(0,0.6) and +(-0.4,-0.4) .. (45:1.5);
\fill[color=green!50!black] (45:1.5) circle (2.5pt) node[right] {{}};
\end{tikzpicture} 
\!\!\! =
\begin{tikzpicture}[very thick,scale=0.75,color=blue!50!black, baseline]
\nicepalecolourscheme (0,0) circle (1.5);
\draw (0,0) circle (1.5);
\fill (-45:1.55) circle (0pt) node[right] {{\small$Q$}};
\draw[<-, very thick] (0.100,-1.5) -- (-0.101,-1.5) node[above] {}; 
\draw[<-, very thick] (-0.100,1.5) -- (0.101,1.5) node[below] {}; 
\fill[color=green!50!black] (135:0) circle (2.5pt) node[right] {{\small$\alpha$}};
\fill[color=green!50!black] (120:0.45) circle (2.5pt) node[left] {{\small$\gamma_A$}};
\fill (0:1.5) circle (2.5pt) node[left] {{\small$\psi$}};
\draw[color=green!50!black] (0,0) .. controls +(0,0.6) and +(-0.4,-0.4) .. (45:1.5);
\fill[color=green!50!black] (45:1.5) circle (2.5pt) node[right] {{}};
\draw[color=green!50!black] (-0.5,-0.6) .. controls +(0,0.6) and +(-0.8,-0.3) .. (60:1.5);
\fill[color=green!50!black] (60:1.5) circle (2.5pt) node[right] {{}};
\draw[color=green!50!black] (0.0,-0.6) .. controls +(0,0.3) and +(-0.2,0.0) .. (-10:1.5);
\fill[color=green!50!black] (-10:1.5) circle (2.5pt) node[right] {{}};
\draw[-dot-, color=green!50!black] (0.0,-0.6) .. controls +(0,-0.5) and +(0,-0.5) .. (-0.5,-0.6);
\draw[color=green!50!black] (-0.25,-1.25) node[Odot] (down) {}; 
\draw[color=green!50!black] (down) -- (-0.25,-0.95); 
\end{tikzpicture} 
\equiv \big\langle {}_{\gamma_A}P (\beta_Q(\alpha) \circ \psi) \big\rangle_Q
\ee
with ${}_{\gamma_A}P$ being the boundary orbifold projector twisted by the Nakayama automorphism~\eqref{eq:Nakayama}. This relation should be interpreted as a selection rule for the disc correlator, to wit only the ${}_{\gamma_A}P$-invariant part of $\beta_Q(\alpha) \circ \psi$ contributes.

Note that in the case $A=A_G^c$, \eqref{eq:discinvariance} becomes the condition of $G$-invariance of the correlator. Indeed, letting $\alpha=\phi_h$ be an RR ground state in the $h$-twisted sector and writing out $A_G^c$ in components, we obtain
\begin{align*}
\big\langle \beta_Q(\phi_h) \circ \psi \big\rangle_Q &=
\begin{tikzpicture}[very thick,scale=0.75,color=blue!50!black, baseline]
\nicepalecolourscheme (0,0) circle (1.5);
\draw (0,0) circle (1.5);
\fill (-45:1.55) circle (0pt) node[right] {{\small$Q$}};
\draw[<-, very thick] (0.100,-1.5) -- (-0.101,-1.5) node[above] {}; 
\draw[<-, very thick] (-0.100,1.5) -- (0.101,1.5) node[below] {}; 
\fill[color=green!65!black] (135:0) circle (2.5pt) node[right] {{\small$\phi_h$}};
\fill[color=green!65!black] (49:1) circle (0pt) node[right] {{\tiny$h$}};
\fill (0:1.5) circle (2.5pt) node[left] {{\small$\psi$}};
\draw[color=green!65!black] (0,0) .. controls +(0,0.6) and +(-0.4,-0.4) .. (45:1.5);
\fill[color=green!65!black] (45:1.5) circle (2.5pt) node[right] {{}};
\end{tikzpicture} 
\!\!\!= \sum_{g\in G}
\begin{tikzpicture}[very thick,scale=0.75,color=blue!50!black, baseline]
\nicepalecolourscheme (0,0) circle (1.5);
\draw (0,0) circle (1.5);
\fill (-45:1.55) circle (0pt) node[right] {{\small$Q$}};
\draw[<-, very thick] (0.100,-1.5) -- (-0.101,-1.5) node[above] {}; 
\draw[<-, very thick] (-0.100,1.5) -- (0.101,1.5) node[below] {}; 
\fill[color=green!65!black] (135:0) circle (2.5pt) node[right] {{\small$\phi_h$}};
\fill[color=green!65!black] (49:1) circle (0pt) node[right] {{\tiny$h$}};
\fill[color=green!65!black] (120:0.45) circle (2.5pt) node[left] {{\small$\gamma_{{}_g I}$}};
\fill[color=green!65!black] (-0.35,-0.75) circle (0pt) node[left] {{\tiny$g$}};
\fill (0:1.5) circle (2.5pt) node[left] {{\small$\psi$}};
\draw[color=green!65!black] (0,0) .. controls +(0,0.6) and +(-0.4,-0.4) .. (45:1.5);
\fill[color=green!65!black] (45:1.5) circle (2.5pt) node[right] {{}};
\draw[color=green!65!black] (-0.5,-0.6) .. controls +(0,0.6) and +(-0.8,-0.3) .. (60:1.5);
\fill[color=green!65!black] (60:1.5) circle (2.5pt) node[right] {{}};
\draw[color=green!65!black] (0.0,-0.6) .. controls +(0,0.3) and +(-0.2,0.0) .. (-10:1.5);
\fill[color=green!65!black] (-10:1.5) circle (2.5pt) node[right] {{}};
\draw[-dot-, color=green!65!black] (0.0,-0.6) .. controls +(0,-0.5) and +(0,-0.5) .. (-0.5,-0.6);
\draw[color=green!65!black] (-0.25,-1.25) node[Odot] (down) {}; 
\draw[color=green!65!black] (down) -- (-0.25,-0.95); 
\end{tikzpicture} 
\!\!\! = |G| \sum_{g\in G}
\begin{tikzpicture}[very thick,scale=0.75,color=blue!50!black, baseline]
\nicepalecolourscheme (0,0) circle (1.5);
\draw (0,0) circle (1.5);
\fill (-45:1.55) circle (0pt) node[right] {{\small$Q$}};
\draw[<-, very thick] (0.100,-1.5) -- (-0.101,-1.5) node[above] {}; 
\draw[<-, very thick] (-0.100,1.5) -- (0.101,1.5) node[below] {}; 
\fill[color=green!65!black] (0,0) circle (2.5pt) node[left] {{}};
\fill[color=green!65!black] (0.2,0) circle (0pt) node[left] {{\small$\phi_h$}};
\fill[color=green!65!black] (160:0.58) circle (2.5pt) node[left] {{}};
\fill[color=green!65!black] (160:0.5) circle (0pt) node[left] {{\small$\gamma_{{}_g I}$}};
\fill (0:1.5) circle (2.5pt) node[left] {{\small$\psi$}};
\draw[color=green!65!black] (0,0) .. controls +(0,0.6) and +(-0.4,-0.4) .. (45:1.5);
\fill[color=green!65!black] (45:1.5) circle (2.5pt) node[right] {{}};
\draw[color=green!65!black] (-0.6,-0.3) .. controls +(0,0.6) and +(-1,-0.3) .. (59:0.7);
\fill[color=green!65!black] (59:0.7) circle (2.5pt) node[right] {{}};
\draw[color=green!65!black] (0.4,-0.3) .. controls +(0,0.2) and +(0.2,-0.1) .. (78:0.3);
\fill[color=green!65!black] (78:0.3) circle (2.5pt) node[right] {{}};
\draw[-dot-, color=green!65!black] (0.4,-0.3) .. controls +(0,-0.5) and +(0,-0.5) .. (-0.6,-0.3);
\draw[color=green!65!black] (-0.1,-0.95) node[Odot] (down) {}; 
\draw[color=green!65!black] (down) -- (-0.1,-0.75); 
\fill[color=green!65!black] (-0.25,-0.7) circle (0pt) node[left] {{\tiny$g$}};
\draw[color=green!65!black] (0.75,-0.5) .. controls +(0,0.6) and +(-0.3,-0.1) .. (20:1.5);
\fill[color=green!65!black] (20:1.5) circle (2.5pt) node[right] {{}};
\draw[color=green!65!black] (1.25,-0.5) .. controls +(0,0.2) and +(-0.1,0.0) .. (-10:1.5);
\fill[color=green!65!black] (-10:1.5) circle (2.5pt) node[right] {{}};
\draw[-dot-, color=green!65!black] (0.75,-0.5) .. controls +(0,-0.25) and +(0,-0.25) .. (1.25,-0.5);
\draw[color=green!65!black] (1,-0.95) node[Odot] (down) {}; 
\draw[color=green!65!black] (down) -- (1,-0.75); 
\fill[color=green!65!black] (0.95,-0.7) circle (0pt) node[left] {{\tiny$g$}};
\fill[color=green!65!black] (1.15,0.95) circle (0pt) node[left] {{\tiny$ghg^{-1}$}};
\end{tikzpicture} 
\\
&= \frac{1}{|G|} \sum_{g\in G}
\big\langle \beta_Q (\rho_{\text{RR}}(g)(\phi_h)) \circ \gamma_g \circ {}_g \psi \circ \gamma_g^{-1} \big\rangle_Q
\end{align*}
where $\gamma_{{}_g I} = \gamma_{A_G^c}|_{{}_g I}$, and in the third step we used the Frobenius and module properties. 

In the case $\psi=1_Q$ and $\alpha$ being an RR ground state,~\eqref{eq:orbiDisc} computes the RR charge of $Q$ under $\alpha$. The property 
\be
{}_{\gamma_A} P(\beta_Q(\alpha)) 
=\!\! 
\begin{tikzpicture}[very thick,scale=0.75,color=blue!50!black, baseline]
\draw (0,-1) node[left] (X) {};
\draw (0,1) node[left] (Xu) {};
\draw (0,-1) -- (0,1); 
\fill[color=green!50!black] (-0.9,-0.4) circle (0pt) node[right] {{\small$\alpha$}};
\fill[color=green!50!black] (-0.8,-0.4) circle (2.5pt) node[right] {{}};
\fill[color=green!50!black]  (0,0.4) circle (2.5pt) node (meet) {};
\draw[color=green!50!black]  (-0.8,-0.4) .. controls +(0,0.4) and +(-0.3,-0.1) .. (0,0.4);
\draw[-dot-, color=green!50!black] (-0.3,-0.6) .. controls +(0,-0.5) and +(0,-0.5) .. (-1.4,-0.6);
\draw[color=green!50!black] (-0.3,-0.6) .. controls +(0,0.2) and +(0,-0.1) .. (0,-0.3);
\draw[color=green!50!black] (-1.4,-0.6) .. controls +(0,0.6) and +(-0.3,-0.1) .. (0,0.7);
\fill[color=green!50!black] (0,-0.3) circle (2.5pt) node (down) {};
\fill[color=green!50!black] (0,0.7) circle (2.5pt) node (up) {};
\draw[color=green!50!black] (-0.85,-1.3) node[Odot] (unit) {}; 
\draw[color=green!50!black] (-0.85,-1.0) -- (unit);
\fill[color=green!50!black] (-1.35,-0.4) circle (2.5pt) node[left] {{\small$\gamma_A$}};
\end{tikzpicture} 
=\!\!
\begin{tikzpicture}[very thick,scale=0.75,color=blue!50!black, baseline]
\draw (0,-1) node[left] (X) {};
\draw (0,1) node[left] (Xu) {};
\draw (0,-1) -- (0,1); 
\fill[color=green!50!black] (-0.9,-0.4) circle (0pt) node[right] {{\small$\alpha$}};
\fill[color=green!50!black] (-0.8,-0.4) circle (2.5pt) node[right] {{}};
\fill[color=green!50!black]  (0,0.4) circle (2.5pt) node (meet) {};
\draw[color=green!50!black]  (-0.8,-0.4) .. controls +(0,0.4) and +(-0.3,-0.1) .. (0,0.4);
\draw[-dot-, color=green!50!black] (-0.3,-0.6) .. controls +(0,-0.5) and +(0,-0.5) .. (-1.4,-0.6);
\draw[color=green!50!black] (-0.3,-0.6) .. controls +(0,0.2) and +(0.3,-0.1) .. (-0.625,0);
\draw[color=green!50!black] (-1.4,-0.6) .. controls +(0,0.6) and +(-0.8,-0.05) .. (-0.325,0.25);
\fill[color=green!50!black] (-0.625,0) circle (2.5pt) node (down) {};
\fill[color=green!50!black] (-0.325,0.25) circle (2.5pt) node (up) {};
\draw[color=green!50!black] (-0.85,-1.3) node[Odot] (unit) {}; 
\draw[color=green!50!black] (-0.85,-1.0) -- (unit);
\fill[color=green!50!black] (-1.365,-0.4) circle (2.5pt) node[left] {{\small$\gamma_A$}};
\end{tikzpicture} 
=
\beta_Q(\pirr(\alpha))
\ee
together with~\eqref{eq:discinvariance} then implies that only $\pirr (\alpha)$ contributes (where $\pirr$ is the projector to RR ground states, see \cite[Sect.\,3.2]{BCP1}). 
In particular, this shows that $A_G^c$-modules are charged only under those RR ground states that are invariant in the $A_G^c$-orbifold theory. 

\subsection{Induced modules}\label{subsec:inducedmodules}

The computation of D-brane RR charges simplifies for induced modules. In the construction of an induced module, one starts with an arbitrary $A'$-module $Q$ and builds the $A$-module $\widehat Q=A \otimes_{A'} Q$ where~$A'$ is another separable Frobenius algebra together with an algebra map $\varphi: A' \rightarrow A$, i.\,e. 
$$
\begin{tikzpicture}[very thick,scale=0.75,color=green!50!black, baseline=0.4cm]
\draw[-dot-] (3,0) .. controls +(0,1) and +(0,1) .. (2,0);
\draw (2.5,0.75) -- (2.5,1.5); 
\fill (2,0) circle (2.5pt) node[left] (alpha) {{\small $\varphi$}};
\fill (3,0) circle (2.5pt) node[right] (beta) {{\small $\varphi$}};
\draw (2,0) -- (2,-0.5); 
\draw (3,0) -- (3,-0.5); 
\fill (2,-0.5) circle (0pt) node[left] {{\small $A'$}};
\fill (3,-0.5) circle (0pt) node[right] {{\small $A'$}};
\fill (2.5,1.5) circle (0pt) node[right] {{\small $A$}};
\end{tikzpicture} 
=
\begin{tikzpicture}[very thick,scale=0.75,color=green!50!black, baseline=0.4cm]
\draw[-dot-] (3,0) .. controls +(0,1) and +(0,1) .. (2,0);
\draw (2.5,0.75) -- (2.5,1.5); 
\fill (2.5,1.15) circle (2.5pt) node[left] (alpha) {{\small $\varphi$}};
\draw (2,0) -- (2,-0.5); 
\draw (3,0) -- (3,-0.5); 
\fill (2,-0.5) circle (0pt) node[left] {{\small $A'$}};
\fill (3,-0.5) circle (0pt) node[right] {{\small $A'$}};
\fill (2.5,1.5) circle (0pt) node[right] {{\small $A$}};
\end{tikzpicture} 
\, , \quad
\begin{tikzpicture}[very thick,scale=0.75,color=green!50!black, baseline=0.4cm]
\draw (0,-0.4) node[Odot] (D) {}; 
\draw (D) -- (0,1.5); 
\fill (0,0.65) circle (2.5pt) node[left] (alpha) {{\small $\varphi$}};
\fill (0,1.5) circle (0pt) node[right] {{\small $A$}};
\end{tikzpicture} 
=
\begin{tikzpicture}[very thick,scale=0.75,color=green!50!black, baseline=0.4cm]
\draw (0,-0.4) node[Odot] (D) {}; 
\draw (D) -- (0,1.5); 
\fill (0,1.5) circle (0pt) node[right] {{\small $A$}};
\end{tikzpicture} 
\, . 
$$
The algebra $A$ is then in particular a right $A'$-module via
$$
\begin{tikzpicture}[very thick,scale=0.75,color=green!50!black, baseline]
\draw (0,-1) node[left] (X) {{\small$A\vphantom{A'}$}};
\draw (0,1) node[left] (Xu) {{\small$A$}};
\draw (0.5,-1) node[right] {{\small$A'$}};
\draw (0,-1) -- (0,1); 
\fill (0.45,-0.5) circle (2.5pt) node[right] {{\small $\varphi$}};
\fill (0,0.0) circle (2.5pt) node (meet) {};
\draw(0.5,-1) .. controls +(0,0.5) and +(0.5,-0.5) .. (0,0.0);
\end{tikzpicture} 
: A \otimes A' \lra A \, .
$$
For $\widehat Q$ of this form and $\psi=1_Q$, 
we express the correlator~\eqref{eq:orbiDisc} in terms of $Q$ as
$$
\begin{tikzpicture}[very thick,scale=1,color=blue!50!black, baseline]
\nicepalecolourscheme (0,0) circle (1.5);
\draw (0,0) circle (1.5);
\fill (-45:1.55) circle (0pt) node[right] {{\small$\widehat Q$}};
\draw[<-, very thick] (0.100,-1.5) -- (-0.101,-1.5) node[above] {}; 
\draw[<-, very thick] (-0.100,1.5) -- (0.101,1.5) node[below] {}; 
\fill[color=green!50!black] (135:0) circle (1.75pt) node[right] {{\small$\alpha$}};
\draw[color=green!50!black] (0,0) .. controls +(0,0.6) and +(-0.4,-0.4) .. (45:1.5);
\fill[color=green!50!black] (45:1.5) circle (1.75pt) node[right] {{}};
\fill[color=green!50!black] (68:0.5) circle (0pt) node[right] {{\tiny$A$}};
\end{tikzpicture} 
\!\!\!=
\begin{tikzpicture}[very thick,scale=1,color=blue!50!black, baseline]
\nicepalecolourscheme (0,0) circle (1.5);
\draw (0,0) circle (1.5);
\draw[color=green!50!black] (0,0) circle (0.4);
\draw[<-, color=green!50!black, very thick] (0.100,-0.4) -- (-0.101,-0.4) node[above] {}; 
\draw[<-, color=green!50!black,  very thick] (-0.100,0.4) -- (0.101,0.4) node[below] {}; 
\fill (-45:1.55) circle (0pt) node[right] {{\small$Q$}};
\draw[<-, very thick] (0.100,-1.5) -- (-0.101,-1.5) node[above] {}; 
\draw[<-, very thick] (-0.100,1.5) -- (0.101,1.5) node[below] {}; 
\fill[color=green!50!black] (0.08,0) circle (0pt) node[left] {{\small$\alpha$}};
\fill[color=green!50!black] (135:0) circle (1.75pt) node[left] {{}};
\draw[color=green!50!black] (0,0) .. controls +(0,0.15) and +(-0.15,-0.15) .. (30:0.4);
\fill[color=green!50!black] (30:0.4) circle (1.75pt) node[right] {{}};
\draw[color=green!50!black] (55:0.4) .. controls +(0.5,-0.05) and +(0,0.1) .. (0.65,0);
\fill[color=green!50!black] (55:0.4) circle (1.75pt) node[right] {{}};
\draw[color=green!50!black] (1.3,0) .. controls +(0,0.1) and +(0,0) .. (30:1.5);
\fill[color=green!50!black] (30:1.5) circle (1.75pt) node[right] {{}};
\draw[-dot-, color=green!50!black] (0.65,0) .. controls +(0,-0.5) and +(0,-0.5) .. (1.3,0);
\draw[color=green!50!black] (0.975,-0.65) node[Odot] (down) {}; 
\draw[color=green!50!black] (down) -- (0.975,-0.4); 
\fill[color=green!50!black] (0.65,0) circle (1.75pt) node[right] {{\small$\varphi$}};
\fill[color=green!50!black] (300:0.35) circle (0pt) node[right] {{\tiny$A$}};
\fill[color=green!50!black] (0.925,-0.45) circle (0pt) node[right] {{\tiny$A'$}};
\end{tikzpicture} 
\!\!\!=
\begin{tikzpicture}[very thick,scale=1,color=blue!50!black, baseline]
\nicepalecolourscheme (0,0) circle (1.5);
\draw (0,0) circle (1.5);
\draw[color=green!50!black] (0,0) circle (0.4);
\draw[<-, color=green!50!black, very thick] (0.100,-0.4) -- (-0.101,-0.4) node[above] {}; 
\draw[<-, color=green!50!black,  very thick] (-0.100,0.4) -- (0.101,0.4) node[below] {}; 
\fill (-45:1.55) circle (0pt) node[right] {{\small$Q$}};
\draw[<-, very thick] (0.100,-1.5) -- (-0.101,-1.5) node[above] {}; 
\draw[<-, very thick] (-0.100,1.5) -- (0.101,1.5) node[below] {}; 
\fill[color=green!50!black] (0.08,0) circle (0pt) node[left] {{\small$\alpha$}};
\fill[color=green!50!black] (135:0) circle (1.75pt) node[left] {{}};
\draw[color=green!50!black] (0,0) .. controls +(0,0.15) and +(-0.15,-0.15) .. (30:0.4);
\fill[color=green!50!black] (30:0.4) circle (1.75pt) node[right] {{}};
\draw[color=green!50!black] (0.65,0) -- (27:0.728);
\draw[-dot-, color=green!50!black] (27:0.728) .. controls +(0,0.56) and +(0,0.56) .. (55:0.4);
\draw[color=green!50!black] (0.44,1.05) node[Odot] (down) {}; 
\draw[color=green!50!black] (down) -- (0.44,0.8); 
\fill[color=green!50!black] (55:0.4) circle (1.75pt) node[right] {{}};
\draw[color=green!50!black] (1.3,0) .. controls +(0,0.1) and +(0,0) .. (30:1.5);
\fill[color=green!50!black] (30:1.5) circle (1.75pt) node[right] {{}};
\draw[-dot-, color=green!50!black] (0.65,0) .. controls +(0,-0.5) and +(0,-0.5) .. (1.3,0);
\draw[color=green!50!black] (0.975,-0.65) node[Odot] (down) {}; 
\draw[color=green!50!black] (down) -- (0.975,-0.4); 
\fill[color=green!50!black] (0.65,0) circle (1.75pt) node[right] {{\small$\varphi$}};
\fill[color=green!50!black] (300:0.35) circle (0pt) node[right] {{\tiny$A$}};
\fill[color=green!50!black] (0.925,-0.45) circle (0pt) node[right] {{\tiny$A'$}};
\end{tikzpicture} 
$$
where in the first step we employed the projector~\eqref{eq:tensorproj} and in the second step we used the Frobenius property for $A$. Recognising the $\pirr$ projector as \cite[Lem.\,3.7]{BCP1}
$$
\pirr(\alpha)=
\begin{tikzpicture}[very thick,scale=1,color=green!50!black, baseline]
\draw[color=green!50!black] (0,0) circle (0.4);
\draw[<-, very thick] (0.100,-0.4) -- (-0.101,-0.4) node[above] {}; 
\draw[<-, very thick] (-0.100,0.4) -- (0.101,0.4) node[below] {}; 
\fill (0.08,0) circle (0pt) node[left] {{\small$\alpha$}};
\fill (135:0) circle (1.75pt) node[left] {{}};
\draw (0,0) .. controls +(0,0.15) and +(-0.15,-0.15) .. (30:0.4);
\fill (30:0.4) circle (1.75pt) node[right] {{}};
\fill (55:0.4) circle (1.75pt) node[right] {{}};
\draw (0.23,0.32) -- (0.23,0.8);
\end{tikzpicture} 
$$
and assuming that $\pirr(\alpha)=\alpha$, the correlator further simplifies to
$$
\begin{tikzpicture}[very thick,scale=0.75,color=blue!50!black, baseline]
\nicepalecolourscheme (0,0) circle (1.5);
\draw (0,0) circle (1.5);
\fill (-45:1.55) circle (0pt) node[right] {{\small$Q$}};
\draw[<-, very thick] (0.100,-1.5) -- (-0.101,-1.5) node[above] {}; 
\draw[<-, very thick] (-0.100,1.5) -- (0.101,1.5) node[below] {}; 
\fill[color=green!50!black] (0,0) circle (2.5pt) node[left] {{\small$\alpha$}};
\draw[-dot-, color=green!50!black] (0,0) .. controls +(0,0.5) and +(0,0.5) .. (0.65,0);
\draw[color=green!50!black] (0.325,0.65) node[Odot] (down) {}; 
\draw[color=green!50!black] (down) -- (0.325,0.4); 
\draw[color=green!50!black] (1.3,0) .. controls +(0,0.1) and +(0,0) .. (30:1.5);
\fill[color=green!50!black] (30:1.5) circle (2.5pt) node[right] {{}};
\draw[-dot-, color=green!50!black] (0.65,0) .. controls +(0,-0.5) and +(0,-0.5) .. (1.3,0);
\draw[color=green!50!black] (0.975,-0.65) node[Odot] (down) {}; 
\draw[color=green!50!black] (down) -- (0.975,-0.4); 
\fill[color=green!50!black] (0.65,0) circle (2.5pt) node[right] {{\small$\varphi$}};
\fill[color=green!50!black] (0.25,0.35) circle (0pt) node[left] {{\tiny$A$}};
\fill[color=green!50!black] (0.85,-0.4) circle (0pt) node[right] {{\tiny$A'$}};
\end{tikzpicture} 
.
$$

In the special case $A=A_G^c$ and $A'=A_H^c$, with $H$ a subgroup of $G$, we have an embedding $\varphi: A_H^c \rightarrow A_G^c$, and only $\alpha \in \Hom(I,A_H^c)$ gives a nonvanishing contribution to the correlator,\footnote{If $H$ is the trivial subgroup, this is the standard statement that a `bulk brane' $\widehat Q=A_G^c \otimes Q$ couples only to fields in the untwisted sector.} which then reduces to
\be\label{eq:chargeinduced}
\begin{tikzpicture}[very thick,scale=0.75,color=blue!50!black, baseline]
\nicepalecolourscheme (0,0) circle (1.5);
\draw (0,0) circle (1.5);
\fill (-45:1.55) circle (0pt) node[right] {{\small$\widehat Q$}};
\draw[<-, very thick] (0.100,-1.5) -- (-0.101,-1.5) node[above] {}; 
\draw[<-, very thick] (-0.100,1.5) -- (0.101,1.5) node[below] {}; 
\fill[color=green!50!black] (135:0) circle (2.5pt) node[right] {{\small$\alpha$}};
\draw[color=green!50!black] (0,0) .. controls +(0,0.6) and +(-0.4,-0.4) .. (45:1.5);
\fill[color=green!50!black] (45:1.5) circle (1.75pt) node[right] {{}};
\end{tikzpicture} 
\!\!\!=
\frac{|G|}{|H|} \;
\begin{tikzpicture}[very thick,scale=0.75,color=blue!50!black, baseline]
\nicepalecolourscheme (0,0) circle (1.5);
\draw (0,0) circle (1.5);
\fill (-45:1.55) circle (0pt) node[right] {{\small$Q$}};
\draw[<-, very thick] (0.100,-1.5) -- (-0.101,-1.5) node[above] {}; 
\draw[<-, very thick] (-0.100,1.5) -- (0.101,1.5) node[below] {}; 
\fill[color=green!50!black] (0,0) circle (2.5pt) node[left] {{\small$\alpha$}};
\draw[-dot-, color=green!50!black] (0,0) .. controls +(0,0.5) and +(0,0.5) .. (0.65,0);
\draw[color=green!50!black] (0.325,0.65) node[Odot] (down) {}; 
\draw[color=green!50!black] (down) -- (0.325,0.4); 
\draw[color=green!50!black] (1.3,0) .. controls +(0,0.1) and +(0,0) .. (30:1.5);
\fill[color=green!50!black] (30:1.5) circle (2.5pt) node[right] {{}};
\draw[-dot-, color=green!50!black] (0.65,0) .. controls +(0,-0.5) and +(0,-0.5) .. (1.3,0);
\draw[color=green!50!black] (0.975,-0.65) node[Odot] (down) {}; 
\draw[color=green!50!black] (down) -- (0.975,-0.4); 
\end{tikzpicture} 
\!\!\!=
\frac{|G|}{|H|} \;
\begin{tikzpicture}[very thick,scale=0.75,color=blue!50!black, baseline]
\nicepalecolourscheme (0,0) circle (1.5);
\draw (0,0) circle (1.5);
\fill (-45:1.55) circle (0pt) node[right] {{\small$Q$}};
\draw[<-, very thick] (0.100,-1.5) -- (-0.101,-1.5) node[above] {}; 
\draw[<-, very thick] (-0.100,1.5) -- (0.101,1.5) node[below] {}; 
\fill[color=green!50!black] (135:0) circle (2.5pt) node[left] {{}};
\fill[color=green!50!black] (0.1,0) circle (0pt) node[left] {{\small$\alpha$}};
\draw[color=green!50!black] (0,0) .. controls +(0,0.6) and +(-0.4,-0.4) .. (45:1.5);
\fill[color=green!50!black] (45:1.5) circle (2.5pt) node[right] {{}};
\end{tikzpicture} 
,
\ee
where we employed the Frobenius property for $A_H^c$ in the second step. In particular, this shows that finding the charges of the induced module $\widehat Q$ simply amounts to looking for fields $\alpha \in \Hom(I,A_H^c)$ with $\pi^{\text{RR}}_{A^c_G}(\alpha)=\alpha$ that couple to $Q$.

\medskip

We close this section by computing D-brane charges in a simple Landau-Ginzburg orbifold with discrete torsion. 
A discussion of the corresponding CFT is provided in Appendix~\ref{app:CFT}. 

\begin{example}\label{ex:inducedcharges}
We consider the potential $W=x_1^d+x_2^d$ with symmetry group $G'=\Z_d\times \Z_d$ acting on the variables by
$$
G'\ni (m,n): (x_1,x_2) \lmt (\omega^m x_1,\omega^n x_2) \, , \quad \omega=\E^{2 \pi \I/d} \, .
$$
Discrete torsion in this model is classified by $c \in \Z_d \cong H^2 (G',U(1))$. 
In this example we choose the group cohomology structure to be additive, so $c=0$ is the trivial class; then the discrete torsion phases are defined by 
\be\label{eq:discrphasesZdZd}
\varepsilon \big((m',n'),(m,n)\big)=\omega^{c(m'n-mn')} \, . 
\ee
One may also allow for nontrivial $\kappa \in \Hom(G',\Z_2)$, which determines the sign of the $G'$-action on untwisted RR ground states, as explained in Section~\ref{subsec:orbidefect}. In this case one has to assume $d$ to be even.

We restrict ourselves to induced $A_{G',\kappa}^c$-modules of the form $\widehat Q= A_{G',\kappa}^c \otimes_{A_G} Q$, where $Q$ is a matrix factorisation equivariant with respect to the diagonal subgroup $G\cong \Z_d$. For concreteness, we choose $Q$ to be the permutation matrix factorisation with 
$$
d_Q=\begin{pmatrix}0 & x-\mu y \\ \displaystyle \prod_{\mu'\neq \mu, \mu'^d=-1}(x-\mu' y) & 0\end{pmatrix}, \quad \gamma_{(m,m)}= \omega^{pm} \begin{pmatrix}1 & 0\\ 0 & \omega^m \end{pmatrix},
$$
where $\mu^d=-1$ and $p \in \Z_d$ denotes the representation label. 

Before orbifold projection, the spectrum of twisted RR ground states that couple to B-type D-branes in this model consists of the (unique) states $\phi_{(m,n)}$ in the $(m,n)$-twisted sectors with $m,n\neq 0$. The selection rule we have derived for induced modules further imposes $m=n$. Provided that $\phi_{(m,m)}$ is invariant in the $A_{G',\kappa}^c$-orbifold 
theory, we can use~\eqref{eq:chargeinduced} to compute the charge of $\widehat Q$ under this state as
$$
\big\langle \beta_{\widehat Q} (\phi_{(m,m)}) \big\rangle_{\widehat Q} =d \, \big\langle \beta_Q (\phi_{(m,m)}) \big\rangle_Q = -d \, \frac{\str_Q[ \gamma_{(m,m)}]}{(1-\omega)^2} \neq 0 \, ,
$$
where the second expression is evaluated directly using the residue formulae of~\cite{w0412274, BCP1}.\footnote{The factor $(1-\omega)^{-2}$ arises due to the normalisation of $\phi_{(m,m)}$, see \cite[App.\,A.3]{BCP1}.}

It remains to determine which of the states $\phi_{(m,m)}$ satisfy the invariance condition $(-1)^{\kappa_g(\kappa_h+1)}\varepsilon(g,h)=1$ for all $g \in G'$, where we denoted $h=(m,m)$. First, we consider the theory with $\kappa=0$. Choosing `minimal discrete torsion', that is $c=1$, we can read off from \eqref{eq:discrphasesZdZd} that none of the states survive the projection. On the other hand, for $c>1$, we obtain the requirement $\frac{d}{c}\in \Z$, hence there are $c-1$ invariant twisted states that have a nonvanishing coupling to the projective brane~$\widehat Q$. 

We also consider an example with $\kappa \neq 0$ (hence $d$ must be even), namely, we set $\kappa_{(1,0)}=\kappa_{(0,1)}=1$ and $c=1$. The invariance condition in this case has one solution, namely $m=\frac{d}{2}$, and by the argument above the corresponding RR ground state $\phi_{(\frac{d}{2},\frac{d}{2})}$ has again a nonzero coupling with $\widehat Q$.

Finally, let us note that the charge of $\widehat Q$ under $\phi_{(m,m)}$ can also be computed directly: 
$$
\big\langle \beta_{\widehat Q} (\phi_{(m,m)}) \big\rangle_{\widehat Q} = \big\langle \beta_Q (\phi_{(m,m)}) \big\rangle_Q \, \sum_{k \in \Z_d}   (-1)^{\kappa_{(0,k)}} \, \varepsilon \big((m,m),(0,k)\big)
$$
using the explicit form of the $c$-projective representation (analogous to~\eqref{eq:projaction2}, where here we take the group structure on $H^2 (G',U(1))$ to be multiplicative again)
$$
\widehat Q = \bigoplus_{k \in H} {}_k Q [\kappa_k] \, , 
\quad 
\widehat \gamma_{(h,g)} = \sum_{k \in H} \frac{c'(g,hk)}{c'(hk,g) \, c'(g,h)} \,  {}_{hk} (\gamma_g)[\kappa_{hk}] \circ {}_g(\pi_{k,hk})
$$ 
with $H=\{(0,l) \, | \, l\in \Z_d \} \subset G'$. 
\end{example}

\section{Orbifold equivalence in equivariant completion}\label{sec:orbifoldequivalences}

In this section we prove a general equivalence result in the setting of pivotal bicategories: 
any two objects $(a,A)$, $(a,A')$ in the equivariant completion of a pivotal bicategory are orbifold equivalent (under one mild assumption). 
For background and notation we refer to \cite[Sect.\,2\,\&\,4]{cr1210.6363}. 

\medskip

Let~$\B$ be a pivotal bicategory with idempotent complete categories of 1-morphisms. Recall that its equivariant completion $\Beq$ has objects which are pairs $(a,A)$ with $a\in \B$ and $A\in\B(a,a)$ a separable Frobenius algebra. 1- and 2-morphisms in $\Beq$ are bimodules and bimodule maps, respectively, horizontal composition is the tensor product over the intermediate Frobenius algebra, and the unit of $(a,A)$ is $I_{(a,A)} = A$. This is indeed a completion in the sense that there is an equivalence $\Beq \cong (\Beq)_{\mathrm{eq}}$. 

By assumption left and right adjoints of any $X\in \B(a,b)$ coincide, $X^\dagger = \dX$, but for a $B$-$A$-bimodule~$X$ its left and right adjoints as 1-morphisms in $\Beq$ are given by
$$
\deqX = {}_{\gamma_A^{-1}} (\dX) 
\, , \quad 
\ev_X = \!
\begin{tikzpicture}[very thick,scale=1.0,color=blue!50!black, baseline=.8cm]
\draw[line width=0pt] 
(1.75,1.75) node[line width=0pt, color=green!50!black] (A) {{\small$A\vphantom{\deqX }$}}
(1,0) node[line width=0pt] (D) {{\small$X\vphantom{\deqX }$}}
(0,0) node[line width=0pt] (s) {{\small$\deqX $}}; 
\draw[directed] (D) .. controls +(0,1.5) and +(0,1.5) .. (s);

\draw[color=green!50!black] (1.25,0.55) .. controls +(0.0,0.25) and +(0.25,-0.15) .. (0.86,0.95);
\draw[-dot-, color=green!50!black] (1.25,0.55) .. controls +(0,-0.5) and +(0,-0.5) .. (1.75,0.55);

\draw[color=green!50!black] (1.5,-0.1) node[Odot] (unit) {}; 
\draw[color=green!50!black] (1.5,0.15) -- (unit);

\fill[color=green!50!black] (0.86,0.95) circle (2pt) node (meet) {};

\draw[color=green!50!black] (1.75,0.55) -- (A);
\end{tikzpicture}
\circ \xi
\, , \quad
\coev_X =  \vartheta \circ 
\begin{tikzpicture}[very thick,scale=1.0,color=blue!50!black, baseline=-.8cm,rotate=180]
\draw[line width=0pt] 
(3.21,1.85) node[line width=0pt, color=green!50!black] (B) {{\small$B\vphantom{\deqX }$}}
(3,0) node[line width=0pt] (D) {{\small$X\vphantom{\deqX }$}}
(2,0) node[line width=0pt] (s) {{\small$\deqX $}}; 
\draw[redirected] (D) .. controls +(0,1.5) and +(0,1.5) .. (s);

\fill[color=green!50!black] (2.91,0.85) circle (2pt) node (meet) {};

\draw[color=green!50!black] (2.91,0.85) .. controls +(0.2,0.25) and +(0,-0.75) .. (B);

\end{tikzpicture}
$$
and similarly for $X^\star = (X^\dagger)_{\gamma_B}$. Here the maps $\xi: \deqX \otimes_B X \rightarrow \deqX  \otimes X$ and $\vartheta: X \otimes \deqX  \rightarrow X \otimes_A \deqX $ are splitting and projection 2-morphisms in~$\B$, and the notation $_{\alpha}(-)_{\beta}$ indicates that the left and right module actions are twisted by pre-composing with~$\alpha$ and~$\beta$, respectively. The relevant algebra maps for us are the Nakayama automorphism~$\gamma_A$ and its inverse: 
\be\label{eq:Nakayama}
\gamma_A = 
\begin{tikzpicture}[very thick, scale=0.5,color=green!50!black, baseline=-0.35cm]
\draw (0,0.8) -- (0,2);
\draw[-dot-] (0,0.8) .. controls +(0,-0.5) and +(0,-0.5) .. (-0.75,0.8);
\draw[directedgreen, color=green!50!black] (-0.75,0.8) .. controls +(0,0.5) and +(0,0.5) .. (-1.5,0.8);
\draw[-dot-] (0,-1.8) .. controls +(0,0.5) and +(0,0.5) .. (-0.75,-1.8);
\draw[redirectedgreen, color=green!50!black] (-0.75,-1.8) .. controls +(0,-0.5) and +(0,-0.5) .. (-1.5,-1.8);
\draw (0,-1.8) -- (0,-3);
\draw (-1.5,0.8) -- (-1.5,-1.8);
\draw (-0.375,-0.2) node[Odot] (D) {}; 
\draw (-0.375,0.4) -- (D);
\draw (-0.375,-0.8) node[Odot] (E) {}; 
\draw (-0.375,-1.4) -- (E);
\end{tikzpicture}
\, , \quad
\gamma_A^{-1} = 
\begin{tikzpicture}[very thick, scale=0.5,color=green!50!black, baseline=-0.35cm]
\draw (0,0.8) -- (0,2);
\draw[-dot-] (0,0.8) .. controls +(0,-0.5) and +(0,-0.5) .. (0.75,0.8);
\draw[directedgreen, color=green!50!black] (0.75,0.8) .. controls +(0,0.5) and +(0,0.5) .. (1.5,0.8);
\draw[-dot-] (0,-1.8) .. controls +(0,0.5) and +(0,0.5) .. (0.75,-1.8);
\draw[redirectedgreen, color=green!50!black] (0.75,-1.8) .. controls +(0,-0.5) and +(0,-0.5) .. (1.5,-1.8);
\draw (0,-1.8) -- (0,-3);
\draw (1.5,0.8) -- (1.5,-1.8);
\draw (0.375,-0.2) node[Odot] (D) {}; 
\draw (0.375,0.4) -- (D);
\draw (0.375,-0.8) node[Odot] (E) {}; 
\draw (0.375,-1.4) -- (E);
\end{tikzpicture}
\, . 
\ee

\medskip

Our aim is to show that under some condition any two objects $(a,A)$ and $(a,A')$ in $\Beq$ are orbifold equivalent, i.\,e.~there is an algebra~$\mathcal A$ and an isomorphism
\be\label{eq:Xschick}
\mathcal X : \big( (a,A), \mathcal A \big) \lra \big( (a,A'), I_{(a,A')} \big) \cong (a,A')
\quad \text{in } (\Beq)_{\mathrm{eq}} \cong \Beq. 
\ee
We will shortly describe~$\mathcal A$ and~$\mathcal X$ explicitly and give the precise statement in Theorem~\ref{thm:orbiequi}. One way to satisfy the condition mentioned above is to ask~$A'$ to be special, i.\,e.~$
\begin{tikzpicture}[very thick,scale=0.25,color=green!50!black, baseline]
\draw (0,-0.05) node[Odot] (D) {}; 
\draw (0,1.1) node[Odot] (E) {}; 
\draw (D) -- (E); 
\end{tikzpicture} 
\sim 1_{I_a}$. 

\medskip

We will work with the following variants of results obtained in \cite{cr1210.6363}. 

\begin{proposition}\label{prop:variant}
Let~$\B$ be a bicategory with adjoints, and $\mathcal X \in \B(a,b)$ such that there is an isomorphism $\alpha: \dsX \rightarrow \mathcal X^\dagger$.
\begin{enumerate}
\item The 1-morphism $\mathcal A = \mathcal X^\dagger \otimes \mathcal X$ is a Frobenius algebra. If furthermore $\tr_{\mathrm{r}}(\alpha) = 
\begin{tikzpicture}[very thick,scale=0.15,color=blue!50!black, baseline]
\draw (0,0) circle (2);
\draw[->, very thick] (0.100,-2) -- (-0.101,-2) node[above] {}; 
\draw[->, very thick] (0.100,2) -- (0.101,2) node[below] {}; 
\fill (0:2) circle (9.9pt) node[left] {{\small$\alpha$}};
\end{tikzpicture} 
$ 
is invertible, then~$\mathcal A$ is separable. 
\vspace{-0.2cm}
\item If $\tr_{\mathrm{r}}(\alpha)$ is invertible, then for $\mathcal A = \mathcal X^\dagger \otimes \mathcal X$ we have $\mathcal X \otimes_{\mathcal A} \mathcal X^\dagger \cong I_b$ and hence $(a,\mathcal A) \cong (b,I_b)$ in $\Beq$. 
\end{enumerate}
\end{proposition}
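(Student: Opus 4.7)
For part~(i) the plan is to build a Frobenius structure on $\mathcal A = \mathcal X^\dagger \otimes \mathcal X$ by combining the canonical algebra structure associated to the right adjunction $\mathcal X \dashv \mathcal X^\dagger$ with the canonical coalgebra structure associated to the left adjunction $\dsX \dashv \mathcal X$, transporting the latter along $\alpha$. Concretely I would set $\mu = 1_{\mathcal X^\dagger} \otimes \ev_{\mathcal X} \otimes 1_{\mathcal X}$ and $\eta = \coev_{\mathcal X}$, and dually define $\Delta$ and $\varepsilon$ from $\tcoev_{\mathcal X}$ and $\tev_{\mathcal X}$ with an $\alpha^{\pm 1}$ inserted on the $\dsX$-strand to turn it into an $\mathcal X^\dagger$-strand. (Co)associativity and (co)unitality are then routine zigzag verifications, and the Frobenius identities $(\mu \otimes 1)(1 \otimes \Delta) = \Delta \mu = (1 \otimes \mu)(\Delta \otimes 1)$ reduce, after cancelling an $\alpha \circ \alpha^{-1}$ pair by naturality, to the standard diagrammatic fact that all three expressions bend into a single $\ev$/$\coev$-pair along adjacent strands.

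For separability I would compute $\mu \circ \Delta \colon \mathcal A \to \mathcal A$: the central $\ev$/$\coev$ combination collapses by a zigzag, leaving a closed loop on the $\mathcal X$-strand threaded through $\alpha$, which by definition is the scalar $\tr_{\mathrm r}(\alpha)$ times $1_{\mathcal A}$. Invertibility of $\tr_{\mathrm r}(\alpha)$ then allows rescaling $\Delta$ by $\tr_{\mathrm r}(\alpha)^{-1}$ (which remains coassociative and counital, since the loop scalar is central in the relevant endomorphism space), yielding $\mu \circ \Delta = 1_{\mathcal A}$ and completing part~(i).

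For part~(ii) I would exhibit explicit mutually inverse 2-morphisms
$f = \vartheta \circ \coev_{\mathcal X} \colon I_b \to \mathcal X \otimes_{\mathcal A} \mathcal X^\dagger$
and
$g = \tev_{\mathcal X} \circ (1_{\mathcal X} \otimes \alpha^{-1}) \circ \xi \colon \mathcal X \otimes_{\mathcal A} \mathcal X^\dagger \to I_b$.
Since $\xi \circ \vartheta$ is the tensor projector of~\eqref{eq:tensorproj}, which for the rescaled Frobenius structure above reduces, by one application of the Frobenius identity, to a zigzag pattern weighted by $\tr_{\mathrm r}(\alpha)^{-1}$, the composite $g \circ f$ collapses to $1_{I_b}$ by one zigzag together with the separability normalisation from~(i). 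The composite $f \circ g$ is similarly computed via $\vartheta \circ \xi = 1$ and one Frobenius move, giving $f \circ g = 1_{\mathcal X \otimes_{\mathcal A} \mathcal X^\dagger}$. The tautological equality $\mathcal X^\dagger \otimes_{I_b} \mathcal X = \mathcal A$ on the other side then promotes $\mathcal X$ to the desired equivalence $(a,\mathcal A) \cong (b, I_b)$ in $\Beq$.

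The main obstacle I anticipate is the careful bookkeeping of the $\alpha$-insertions and of the $\tr_{\mathrm r}(\alpha)$-normalisation when simplifying $g \circ f$: it is precisely the separability from~(i), equivalently the factor $\tr_{\mathrm r}(\alpha)^{-1}$ absorbed into $\Delta$, that makes the idempotent $\xi \circ \vartheta$ split without remainder and produces the on-the-nose identity $1_{I_b}$. Everything else is diagrammatic bookkeeping via naturality and the zigzag identities of the two adjunctions.
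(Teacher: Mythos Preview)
Your proposal is correct and matches the paper's approach: the paper defines exactly the (co)multiplication and (co)unit you describe---$\mu$ and $\eta$ from the right adjunction, $\Delta$ and $\varepsilon$ from the left adjunction with $\alpha^{\pm 1}$ inserted---and then refers to the analogous arguments in \cite[Thm.\,4.3\,\&\,4.4]{cr1210.6363} with the $\alpha$-insertions tracked through, which is precisely what your sketch spells out.

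One small bookkeeping slip: rescaling $\Delta$ by $\tr_{\mathrm r}(\alpha)^{-1}$ alone does \emph{not} preserve counitality; you must simultaneously rescale $\varepsilon$ by $\tr_{\mathrm r}(\alpha)$. The paper does exactly this, building the factor $\tr_{\mathrm r}(\alpha)^{-1}$ into $\Delta$ and the compensating factor $1_{\mathcal X^\dagger} \otimes \tr_{\mathrm r}(\alpha) \otimes 1_{\mathcal X}$ into $\varepsilon$ from the outset. With that adjustment your argument goes through unchanged.
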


\begin{proof}
(i) 
We define the (co)multiplication and (co)unit on~$\mathcal A$ to be
$$
\begin{tikzpicture}[very thick,scale=0.9,color=blue!50!black, baseline=.9cm]
\draw[line width=0pt] 
(3,0) node[line width=0pt] (D) {{\small$\mathcal{X}^\dagger$}}
(2,0) node[line width=0pt] (s) {{\small$\mathcal{X}\vphantom{\mathcal{X}^\dagger}$}}; 
\draw[redirected] (D) .. controls +(0,1) and +(0,1) .. (s);
\draw[line width=0pt] 
(3.45,0) node[line width=0pt] (re) {{\small$\mathcal{X}\vphantom{\mathcal{X}^\dagger}$}}
(1.55,0) node[line width=0pt] (li) {{\small$\mathcal{X}^\dagger$}}; 
\draw[line width=0pt] 
(2.7,2) node[line width=0pt] (ore) {{\small$\mathcal{X}\vphantom{\mathcal{X}^\dagger}$}}
(2.3,2) node[line width=0pt] (oli) {{\small$\mathcal{X}^\dagger$}}; 
\draw (li) .. controls +(0,0.75) and +(0,-0.25) .. (2.3,1.25);
\draw (2.3,1.25) -- (oli);
\draw (re) .. controls +(0,0.75) and +(0,-0.25) .. (2.7,1.25);
\draw (2.7,1.25) -- (ore);
\end{tikzpicture}
, \quad
\begin{tikzpicture}[very thick,scale=1.0,color=blue!50!black, baseline=-.4cm,rotate=180]
\draw[line width=0pt] 
(3,0) node[line width=0pt] (D) {{\small$\mathcal{X}^\dagger$}}
(2,0) node[line width=0pt] (s) {{\small$\mathcal{X}\vphantom{\mathcal{X}^\dagger}$}}; 
\draw[directed] (D) .. controls +(0,1) and +(0,1) .. (s);
\end{tikzpicture}
, \quad
\begin{tikzpicture}[very thick,scale=0.9,color=blue!50!black, baseline=-0.9cm, rotate=180]
\draw[line width=0pt] 
(3,0) node[line width=0pt] (D) {{\small$\mathcal{X}\vphantom{\mathcal{X}^\dagger}$}}
(2,0) node[line width=0pt] (s) {{\small${\mathcal{X}^\dagger}$}}; 
\draw[redirected] (D) .. controls +(0,1) and +(0,1) .. (s);
\draw (2.5,1.13) node (D) {{\small$\boldsymbol{\star}$}}; 
\draw[line width=0pt] 
(3.45,0) node[line width=0pt] (re) {{\small${\mathcal{X}^\dagger}$}}
(1.55,0) node[line width=0pt] (li) {{\small$\mathcal{X}\vphantom{\mathcal{X}^\dagger}$}}; 
\draw[line width=0pt] 
(2.7,2) node[line width=0pt] (ore) {{\small${\mathcal{X}^\dagger}$}}
(2.3,2) node[line width=0pt] (oli) {{\small$\mathcal{X}\vphantom{\mathcal{X}^\dagger}$}}; 
\draw (li) .. controls +(0,0.75) and +(0,-0.25) .. (2.3,1.25);
\draw (2.3,1.25) -- (oli);
\draw (re) .. controls +(0,0.75) and +(0,-0.25) .. (2.7,1.25);
\draw (2.7,1.25) -- (ore);
\fill (2.03,0.5) circle (2.2pt) node[left] {{\small$\alpha$}};
\end{tikzpicture}
\, , \quad 
\begin{tikzpicture}[very thick,scale=1.0,color=blue!50!black, baseline=.4cm]
\draw[line width=0pt] 
(3,0) node[line width=0pt] (D) {{\small$\mathcal{X}\vphantom{\mathcal{X}^\dagger}$}}
(2,0) node[line width=0pt] (s) {{\small$\mathcal X^\dagger$}}; 
\draw[directed] (D) .. controls +(0,1) and +(0,1) .. (s);
\fill (2.03,0.5) circle (2.2pt) node[right] {{\small$\alpha^{-1}$}};
\end{tikzpicture}
\circ (1_{\mathcal{X}^\dagger} \otimes \tr_{\mathrm{r}}(\alpha) \otimes 1_\mathcal{X})
$$
with $\boldsymbol{\star} = \tr_{\mathrm{r}}(\alpha)^{-1}$. Then the proof works analogously to the one of \cite[Thm.\,4.3]{cr1210.6363}. 

(ii) Same as for \cite[Thm.\,4.4]{cr1210.6363} after (two) appropriate insertions of~$\alpha$. 
\end{proof}

In a pivotal bicategory~$\B$ we now pick two separable Frobenius algebras $A,A'\in\B(a,a)$ and define 
$$
\mathcal X = A' \otimes A \, . 
$$
Endowed with the obvious $A'$-$A$-bimodule structure~$\mathcal X$ is a 1-morphism in $\Beq$, and we have $\mathcal X^\star = (A' \otimes A)^\dagger_{\gamma_{A'}} \cong {}_{\gamma_A^{-1}} {}^{\dagger} (A' \otimes A) = \dseqX$: 

\begin{lemma}
\be\label{eq:alpha}
\alpha = 
\begin{tikzpicture}[very thick,scale=0.65,color=blue!50!black, baseline=-0.925cm]
\draw[line width=1pt] 
(-1,2) node[line width=0pt] (XY) {{\small $\mathcal X^\star$}}; 
\draw[directedgreen] (1,0) .. controls +(0,1) and +(0,1) .. (2,0);
\draw[directedgreen] (0,0) .. controls +(0,2) and +(0,2) .. (3,0);
\draw[directed, ultra thick] (-1,0) .. controls +(0,-1) and +(0,-1) .. (0.5,0);
\draw[color=green!50!black] (2,0) -- (2,-2.5);
\draw[color=green!50!black] (3,0) -- (3,-2.5);
\draw[dotted] (0,0) -- (1,0);
\draw[ultra thick] (-1,0) -- (XY);
\draw[redirectedgreen] (1,-2.5) .. controls +(0,-1) and +(0,-1) .. (2,-2.5);
\draw[redirectedgreen] (0,-2.5) .. controls +(0,-2) and +(0,-2) .. (3,-2.5);
\draw[redirected, ultra thick] (-1,-2.5) .. controls +(0,1) and +(0,1) .. (0.5,-2.5);
\draw[dotted] (0,-2.5) -- (1,-2.5);
\draw[line width=1pt] 
(-1,-4.5) node[line width=0pt] (XYdown) {{\small $\dseqX$}}; 
\draw[ultra thick] (-1,-2.5) -- (XYdown);
\fill[color=green!50!black] (2,-1.25) circle (2.5pt) node[left] {{\small $\gamma_{\dA}$}};
\fill[color=green!50!black] (3,-1.25) circle (2.5pt) node[right] {{\small $\gamma_{\dA'}$}};
\end{tikzpicture}
\ee
is an isomorphism in $\Beq$. 
\end{lemma}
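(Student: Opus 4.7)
The plan is to reduce the claim to two verifications: that the 2-morphism $\alpha$ in \eqref{eq:alpha} is a bimodule map between the twisted $A$-$A'$-bimodule structures on $\dseqX$ and $\mathcal X^\star$, and that it is invertible. First I would unpack the underlying 1-morphisms. Pivotality of $\B$ gives a canonical identification of the dagger duals $\mathcal X^\dagger$ and ${}^\dagger\mathcal X$ in $\B$, so $\dseqX$ and $\mathcal X^\star$ share a common underlying 1-morphism of $\B$ and differ only in how the module actions are twisted: $\dseqX$ pre-composes the left $A$-action with $\gamma_A^{-1}$, while $\mathcal X^\star$ pre-composes the right $A'$-action with $\gamma_{A'}$. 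The Nakayama insertions $\gamma_{\dA}$ and $\gamma_{\dA'}$ appearing in \eqref{eq:alpha} are placed precisely to absorb these discrepancies.

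Next I would verify the bimodule compatibility. This amounts to two analogous string-diagram chases, one for the left $A$-action and one for the right $A'$-action. The key technical input is the standard pivotal identity that $\gamma_{\dA}$ is the transpose of $\gamma_A^{-1}$ under the pivotal pairing (and likewise for $A'$): sliding $\gamma_{\dA}$ across the cup/cap to which it is attached in \eqref{eq:alpha} converts it into a $\gamma_A^{-1}$ on the $A$-leg of $\mathcal X$, which is exactly the twist needed to match the $\gamma_A^{-1}$-twisted left $A$-action on $\dseqX$ with the untwisted left $A$-action inherited from $\mathcal X^\star$. The right $A'$-verification is entirely dual, and the remaining manipulations are just (co)associativity of the Frobenius structures on $A$ and $A'$ together with pivotal zig-zags.

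For invertibility I would propose as candidate inverse the mirror-image diagram with the decorations $\gamma_{\dA}, \gamma_{\dA'}$ replaced by $\gamma_{\dA}^{-1}, \gamma_{\dA'}^{-1}$. Composing with $\alpha$ in either order and collapsing the resulting ladder via $\gamma_{\dA}\gamma_{\dA}^{-1} = \id$, its $A'$-counterpart, and the zig-zag identities of $\B$, one recovers $1_{\mathcal X^\star}$ and $1_{\dseqX}$ on the nose. The main obstacle is the bookkeeping in the middle step: one has to ensure that after pulling $\gamma_{\dA}$ and $\gamma_{\dA'}$ through the duality caps and cups they land as $\gamma_A^{-1}$ and $\gamma_{A'}$ on the correct legs of $\mathcal X$, and not on the wrong side or with the wrong orientation. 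Once the pivotal/Nakayama conventions are fixed, everything else is a routine string-diagram manipulation in the spirit of the proof of~\cite[Thm.\,4.3]{cr1210.6363}.
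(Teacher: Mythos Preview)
Your overall strategy---reduce to showing $\alpha$ is a bimodule map and then verify invertibility---is exactly what the paper does, and your treatment of invertibility (mirror the diagram with $\gamma_{\dA}^{-1}$, $\gamma_{\dA'}^{-1}$ and collapse via zig-zags) is fine and essentially taken for granted in the paper.

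The difference lies in how the bimodule compatibility is established. You appeal to a ``standard pivotal identity that $\gamma_{\dA}$ is the transpose of $\gamma_A^{-1}$'' and propose to simply slide $\gamma_{\dA}$ through a cup to turn it into a $\gamma_A^{-1}$ on the $A$-leg. This is where your sketch is imprecise: that identity is not a generic pivotal fact but a specific property of the Nakayama automorphism of a Frobenius algebra, and it is not obviously true in the form you state it. The paper does not invoke such a shortcut. Instead it introduces explicit algebra isomorphisms $\phi: A^\dagger \to A$ and $\phi^{-1}: A \to A^\dagger$, rewrites the $\gamma_A^{-1}$-twisted left $A$-action on $A^\dagger$ as the $A^\dagger$-multiplication conjugated by $\phi$ (this is the key identity~\eqref{eq:leftactiontwisted}), and then uses the relation $\gamma_{A^\dagger}\circ\phi^{-1}=\phi^{-1}\circ\gamma_A$ to conclude that $\gamma_{A^\dagger}$ intertwines the twisted and untwisted actions. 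In effect, the paper's computation \emph{is} the proof of the sliding identity you are citing. Your outline is correct, but to make it into a proof you would need to carry out essentially the same diagram chase the paper performs; the ``standard'' step you defer to is precisely the substance of the argument.
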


\begin{proof}
We have to show that 
$$
\gamma_{\dA} = 
\begin{tikzpicture}[very thick, scale=0.5,color=green!50!black, baseline=-0.35cm]
\draw (0,0.8) -- (0,2);
\draw[directedgreen] (0,0.8) .. controls +(0,-0.5) and +(0,-0.5) .. (0.75,0.8);
\draw[-dot-, color=green!50!black] (0.75,0.8) .. controls +(0,0.5) and +(0,0.5) .. (1.5,0.8);
\draw[redirectedgreen] (0,-1.8) .. controls +(0,0.5) and +(0,0.5) .. (0.75,-1.8);
\draw[-dot-, color=green!50!black] (0.75,-1.8) .. controls +(0,-0.5) and +(0,-0.5) .. (1.5,-1.8);
\draw (0,-1.8) -- (0,-3);
\draw (1.5,0.8) -- (1.5,-1.8);
\draw (1.125,1.78) node[Odot] (D) {}; 
\draw (1.125,1.25) -- (D);
\draw (1.125,-2.78) node[Odot] (E) {}; 
\draw (1.125,-2.1) -- (E);
\end{tikzpicture}
: {}_{\gamma_A^{-1}} \dA \lra A^\dagger 
$$
is a map of left $A$-modules, and that $\gamma_{A'} : \dA' \rightarrow A'^\dagger_{\gamma_{A'}}$ is a map of right $A'$-modules. We will work out the details only for the former case, and use $A^\dagger = \dA$ from now on. 

We start by writing the left $A$-action on ${}_{\gamma_A^{-1}} A^\dagger$ in a different form: 
\begin{align*}
\begin{tikzpicture}[very thick,scale=0.75,color=green!50!black, baseline=0cm]
\draw[line width=0] 
(-1,-2.5) node[line width=0pt] (A) {{\small $A\vphantom{A^\dagger}$}}
(0,-2.5) node[line width=0pt] (Ad) {{\small $A^\dagger$}}
(0,1.5) node[line width=0pt] (Au) {{\small $A^\dagger$}};
\draw (Ad) -- (Au);
\fill (0,0) circle (2.5pt) node (meet) {};
\draw[color=green!50!black]  (0,0) .. controls +(0.-0.5,-0.15) and +(0,0.5) .. (A);
\fill (-0.725,-1.25) circle (2.5pt) node[left] (meet) {{\small $\gamma_A^{-1}$}};
\end{tikzpicture}
& 
\stackrel{\mathrm{def.}}{=}
\begin{tikzpicture}[very thick,scale=0.75,color=green!50!black, baseline=0cm]
\draw[directedgreen, color=green!50!black] (0,0) .. controls +(0,1.5) and +(0,1.5) .. (1.5,0);
\draw[directedgreen, color=green!50!black] (-1.5,0) .. controls +(0,-1.5) and +(0,-1.5) .. (0,0);
\fill[color=green!50!black]  (0,0) circle (2.5pt) node (meet) {};
\draw[color=green!50!black]  (0,0) .. controls +(0.15,-0.15) and +(0,0.15) .. (0.25,-0.25);
\draw[-dot-] (0.25,-0.25) .. controls +(0,-0.5) and +(0,-0.5) .. (0.75,-0.25);
\draw (0.5,-0.9) node[Odot] (D) {}; 
\draw (0.5,-0.7) -- (D);
\draw[directedgreen, color=green!50!black] (0.75,-0.25) .. controls +(0,0.5) and +(0,0.5) .. (1.25,-0.25);
\draw[-dot-] (0.25,-1.75) .. controls +(0,0.5) and +(0,0.5) .. (0.75,-1.75);
\draw (0.5,-1.1) node[Odot] (E) {}; 
\draw (0.5,-1.4) -- (E);
\draw[redirectedgreen, color=green!50!black] (0.75,-1.75) .. controls +(0,-0.5) and +(0,-0.5) .. (1.25,-1.75);
\draw (1.25,-0.25) -- (1.25,-1.75);
\draw (-1.5,0) -- (-1.5,1.5);
\draw (1.5,0) -- (1.5,-2.5);
\draw (0.25,-1.75) -- (0.25,-2.5);
\end{tikzpicture}
\stackrel{\mathrm{Frob.}}{=}
\begin{tikzpicture}[very thick,scale=0.75,color=green!50!black, baseline=0cm]
\draw[directedgreen, color=green!50!black] (0,0) .. controls +(0,1.5) and +(0,1.5) .. (1.5,0);
\draw[directedgreen, color=green!50!black] (-1.5,0) .. controls +(0,-1.5) and +(0,-1.5) .. (0,0);
\draw (-1.5,0) -- (-1.5,1.5);
\draw (1.5,0) -- (1.5,-2.5);
\fill[color=green!50!black]  (0,0) circle (2.5pt) node (meet) {};
\draw[directedgreen, color=green!50!black] (0,0) .. controls +(0,0.5) and +(0,0.5) .. (1,0);
\draw (1,0) -- (1,-1.75);
\draw[directedgreen, color=green!50!black] (1,-1.75) .. controls +(0,-0.5) and +(0,-0.5) .. (0.25,-1.75);
\draw[-dot-] (0.25,-1.75) .. controls +(0,0.5) and +(0,0.5) .. (-0.25,-1.75);
\draw (0,-1.1) node[Odot] (E) {}; 
\draw (0,-1.4) -- (E);
\draw (-0.25,-1.75) -- (-0.25,-2.5);
\end{tikzpicture}
=
\begin{tikzpicture}[very thick,scale=0.75,color=green!50!black, baseline=0cm]
\draw (1,0) -- (1,-1.75);
\draw[directedgreen, color=green!50!black] (1,-1.75) .. controls +(0,-0.5) and +(0,-0.5) .. (0.25,-1.75);
\draw[-dot-] (0.25,-1.75) .. controls +(0,0.5) and +(0,0.5) .. (-0.25,-1.75);
\draw (0,-1.1) node[Odot] (E) {}; 
\draw (0,-1.4) -- (E);
\draw (-0.25,-1.75) -- (-0.25,-2.5);
\draw[redirectedgreen, color=green!50!black] (1,0) .. controls +(0,1) and +(0,1) .. (2,0);
\draw[redirectedgreen, color=green!50!black] (2,0) .. controls +(0,-1) and +(0,-1) .. (3,0);
\draw (3,0) -- (3,1.5);
\fill[color=green!50!black]  (2,0) circle (2.5pt) node (meet) {};
\draw[directedgreen, color=green!50!black] (2,0) .. controls +(0,0.5) and +(0,0.5) .. (1.25,0);
\draw (1.25,0) -- (1.25,-2.5);
\end{tikzpicture}
\,
\eq^{\mathrm{Zorro}}_{\mathrm{Frob.}}
\,
\begin{tikzpicture}[very thick,scale=0.75,color=green!50!black, baseline=0cm]
\draw (1,0) -- (1,1.5);
\draw[-dot-, color=green!50!black] (0,0) .. controls +(0,1.5) and +(0,1.5) .. (-2,0);
\draw[directedgreen, color=green!50!black] (1,0) .. controls +(0,-1.25) and +(0,-1.25) .. (0,0);
\draw (-1,1.45) node[Odot] (E) {}; 
\draw (-1,1.15) -- (E);
\fill[color=green!50!black]  (-0.1,0.5) circle (2.5pt) node (meet) {};
\draw[color=green!50!black] (-0.1,0.5) .. controls +(-0.25,0) and +(0,0.3) .. (-0.5,0);
\draw (-0.5,0) -- (-0.5,-0.5);
\draw[-dot-] (-0.5,-0.5) .. controls +(0,-0.5) and +(0,-0.5) .. (-1,-0.5);
\draw (-0.75,-1.15) node[Odot] (D) {}; 
\draw (-0.75,-0.9) -- (D);
\draw[directedgreen, color=green!50!black] (-1,-0.5) .. controls +(0,0.5) and +(0,0.5) .. (-1.5,-0.5);
\draw (-1.5,-0.5) -- (-1.5,-2.5);
\draw (-2,0) -- (-2,-2.5);
\end{tikzpicture}
\\
& 
\stackrel{\mathrm{assoc.}}{=}
\begin{tikzpicture}[very thick,scale=0.75,color=green!50!black, baseline=0cm]
\draw (1,0) -- (1,1.5);
\draw[directedgreen] (1,0) .. controls +(0,-1) and +(0,-1) .. (0,0);
\draw[-dot-] (0,0) .. controls +(0,1) and +(0,1) .. (-1.5,0);
\draw (-0.75,1.1) node[Odot] (E) {}; 
\draw (-0.75,0.8) -- (E);
\draw[-dot-] (-0.75,-1) .. controls +(0,1) and +(0,1) .. (-2.25,-1);
\draw[-dot-] (-0.75,-1) .. controls +(0,-0.5) and +(0,-0.5) .. (-1.25,-1);
\draw (-1,-1.7) node[Odot] (F) {}; 
\draw (-1,-1.4) -- (F);
\draw[directedgreen] (-1.25,-1) .. controls +(0,0.5) and +(0,0.5) .. (-1.75,-1);
\draw (-1.5,0) -- (-1.5,-0.25);
\draw (-1.75,-1) -- (-1.75,-2.5);
\draw (-2.25,-1) -- (-2.25,-2.5);
\end{tikzpicture}
=
\begin{tikzpicture}[very thick,scale=0.75,color=green!50!black, baseline=0cm]
\draw (0,1.5) -- (0,-2.5);
\fill (0,0) circle (2.5pt) node[right] (meet) {{\small $\mu_A$}};
\draw[color=green!50!black]  (0,0) .. controls +(0.-0.5,-0.15) and +(0,0.5) .. (-1,-2.5);
\fill (0,-1.25) circle (2.5pt) node[right] (meet) {{\small $\phi$}};
\fill (0,0.75) circle (2.5pt) node[right] (meet) {{\small $\phi^{-1}$}};
\end{tikzpicture}
\, , 
\end{align*}
where in the last step we wrote~$\mu_A$ for the multiplication on~$A$, and we introduced the algebra maps 
$$
\phi = 
\begin{tikzpicture}[very thick,scale=0.5,color=green!50!black, baseline=0cm]
\draw[directedgreen, color=green!50!black] (0,0) .. controls +(0,1) and +(0,1) .. (-1,0);
\draw[-dot-] (1,0) .. controls +(0,-1) and +(0,-1) .. (0,0);
\draw (-1,0) -- (-1,-1.3); 
\draw (1,0) -- (1,1.3); 
\draw (0.5,-1.2) node[Odot] (end) {}; 
\draw (0.5,-0.8) -- (end); 
\end{tikzpicture}
: A^\dagger \lra A \, , 
\quad 
\phi^{-1} = 
\begin{tikzpicture}[very thick,scale=0.5,color=green!50!black, baseline=0cm]
\draw[-dot-] (0,0) .. controls +(0,1) and +(0,1) .. (-1,0);
\draw[directedgreen, color=green!50!black] (1,0) .. controls +(0,-1) and +(0,-1) .. (0,0);
\draw (-1,0) -- (-1,-1.3); 
\draw (1,0) -- (1,1.3); 
\draw (-0.5,1.2) node[Odot] (end) {}; 
\draw (-0.5,0.8) -- (end); 
\end{tikzpicture}
: A \lra A^\dagger \, . 
$$
We thus find that
\be\label{eq:leftactiontwisted}
\begin{tikzpicture}[very thick,scale=0.75,color=green!50!black, baseline=0cm]
\draw[line width=0] 
(-1,-1.5) node[line width=0pt] (A) {{\small $A\vphantom{A^\dagger}$}}
(0,-1.5) node[line width=0pt] (Ad) {{\small $A^\dagger$}}
(0,1.5) node[line width=0pt] (Au) {{\small $A^\dagger$}};
\draw (Ad) -- (Au);
\fill (0,0) circle (2.5pt) node (meet) {};
\draw[color=green!50!black]  (0,0) .. controls +(0.-0.5,-0.15) and +(0,0.5) .. (A);
\fill (-0.725,-0.63) circle (2.5pt) node[left] (meet) {{\small $\gamma_A^{-1}$}};
\end{tikzpicture}
=
\begin{tikzpicture}[very thick,scale=0.75,color=green!50!black, baseline=0cm]
\draw[line width=0] 
(-1,-1.5) node[line width=0pt] (A) {{\small $A\vphantom{A^\dagger}$}}
(0,-1.5) node[line width=0pt] (Ad) {{\small $A^\dagger$}}
(0,1.5) node[line width=0pt] (Au) {{\small $A^\dagger$}};
\draw (Ad) -- (Au);
\fill (0,0) circle (2.5pt) node[right] (meet) {{\small $\mu_{A^\dagger}$}};
\draw[color=green!50!black]  (0,0) .. controls +(0.-0.5,-0.15) and +(0,0.5) .. (A);
\fill (-0.725,-0.63) circle (2.5pt) node[left] (meet) {{\small $\phi^{-1}$}};
\end{tikzpicture}
\, . 
\ee

Now we can show that~$\gamma_{A^\dagger}$ is a module map: 
\begin{align*}
\begin{tikzpicture}[very thick,scale=0.75,color=green!50!black, baseline=0cm]
\draw[line width=0] 
(-1,-1.5) node[line width=0pt] (A) {{\small $A\vphantom{{}_{\gamma_A^{-1}} A^\dagger}$}}
(0,-1.5) node[line width=0pt] (Ad) {{\small ${}_{\gamma_A^{-1}} A^\dagger$}}
(0,1.5) node[line width=0pt] (Au) {{\small $A^\dagger$}};
\draw (Ad) -- (Au);
\fill (0,0) circle (2.5pt) node (meet) {};
\draw[color=green!50!black]  (0,0) .. controls +(0.-0.5,-0.15) and +(0,0.5) .. (A);
\fill (0,0.75) circle (2.5pt) node[right] (meet) {{\small $\gamma_{A^\dagger}$}};
\end{tikzpicture}
=
\begin{tikzpicture}[very thick,scale=0.75,color=green!50!black, baseline=0cm]
\draw[line width=0] 
(-1,-1.5) node[line width=0pt] (A) {{\small $A\vphantom{A^\dagger}$}}
(0,-1.5) node[line width=0pt] (Ad) {{\small $A^\dagger$}}
(0,1.5) node[line width=0pt] (Au) {{\small $A^\dagger$}};
\draw (Ad) -- (Au);
\fill (0,0) circle (2.5pt) node (meet) {};
\draw[color=green!50!black]  (0,0) .. controls +(0.-0.5,-0.15) and +(0,0.5) .. (A);
\fill (-0.725,-0.63) circle (2.5pt) node[left] (meet) {{\small $\gamma_A^{-1}$}};
\fill (0,0.75) circle (2.5pt) node[right] (meet) {{\small $\gamma_{A^\dagger}$}};
\end{tikzpicture}
\eq^{\eqref{eq:leftactiontwisted}}
\begin{tikzpicture}[very thick,scale=0.75,color=green!50!black, baseline=0cm]
\draw[line width=0] 
(-1,-1.5) node[line width=0pt] (A) {{\small $A\vphantom{A^\dagger}$}}
(0,-1.5) node[line width=0pt] (Ad) {{\small $A^\dagger$}}
(0,1.5) node[line width=0pt] (Au) {{\small $A^\dagger$}};
\draw (Ad) -- (Au);
\fill (0,0) circle (2.5pt) node[right] (meet) {{\small $\mu_{A^\dagger}$}};
\draw[color=green!50!black]  (0,0) .. controls +(0.-0.5,-0.15) and +(0,0.5) .. (A);
\fill (-0.725,-0.63) circle (2.5pt) node[left] (meet) {{\small $\phi^{-1}$}};
\fill (0,0.75) circle (2.5pt) node[right] (meet) {{\small $\gamma_{A^\dagger}$}};
\end{tikzpicture}
=
\begin{tikzpicture}[very thick,scale=0.75,color=green!50!black, baseline=0cm]
\draw[line width=0] 
(-1,-1.5) node[line width=0pt] (A) {{\small $A\vphantom{A^\dagger}$}}
(0,-1.5) node[line width=0pt] (Ad) {{\small $A^\dagger$}}
(0,1.5) node[line width=0pt] (Au) {{\small $A^\dagger$}};
\draw (Ad) -- (Au);
\fill (0,0) circle (2.5pt) node[right] (meet) {{\small $\mu_{A^\dagger}$}};
\draw[color=green!50!black]  (0,0) .. controls +(0.-0.5,-0.15) and +(0,0.5) .. (A);
\fill (-0.81,-0.73) circle (2.5pt) node[left] (meet) {{\small $\phi^{-1}$}};
\fill (0,-0.75) circle (2.5pt) node[right] (meet) {{\small $\gamma_{A^\dagger}$}};
\fill (-0.35,-0.19) circle (2.5pt) node[left] (meet) {{\small $\gamma_{A^\dagger}$}};
\end{tikzpicture}
\eq^{\eqref{eq:leftactiontwisted}}
\begin{tikzpicture}[very thick,scale=0.75,color=green!50!black, baseline=0cm]
\draw[line width=0] 
(-1,-1.5) node[line width=0pt] (A) {{\small $A\vphantom{A^\dagger}$}}
(0,-1.5) node[line width=0pt] (Ad) {{\small $A^\dagger$}}
(0,1.5) node[line width=0pt] (Au) {{\small $A^\dagger$}};
\draw (Ad) -- (Au);
\draw[color=green!50!black]  (0,0) .. controls +(0.-0.5,-0.15) and +(0,0.5) .. (A);
\fill (0,-0.75) circle (2.5pt) node[right] (meet) {{\small $\gamma_{A^\dagger}$}};
\end{tikzpicture} 
.
\end{align*}
In the last step we first used $\gamma_{A^\dagger} \circ \phi^{-1} = \phi^{-1} \circ \gamma_A$, which in turn follows straightforwardly from the definitions. 
\end{proof}

\begin{theorem}\label{thm:orbiequi}
Let $(a,A)$, $(a,A') \in \Beq$, set $\mathcal X = A' \otimes A$ and let $\alpha: \dseqX \rightarrow \mathcal X^\star$ as in~\eqref{eq:alpha} have invertible $\tr_{\mathrm{r}}(\alpha)$. Then with $\mathcal A = \mathcal X^\star \otimes_{A'} \mathcal X$ we have 
$$
\big( (a,A), \mathcal A \big) \cong \big( (a,A'), A' \big) 
\quad \text{in } (\Beq)_{\mathrm{eq}} \cong \Beq . 
$$
In particular $\tr_{\mathrm{r}}(\alpha)$ is invertible if~$A'$ is special, i.\,e.~$
\begin{tikzpicture}[very thick,scale=0.25,color=green!50!black, baseline]
\draw (0,-0.05) node[Odot] (D) {}; 
\draw (0,1.1) node[Odot] (E) {}; 
\draw (D) -- (E); 
\end{tikzpicture} 
$ is a nonzero multiple of~$1_{I_a}$. 
\end{theorem}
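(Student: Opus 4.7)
The core strategy is to view this as a direct application of Proposition~\ref{prop:variant} inside the pivotal bicategory $\Beq$ rather than inside~$\B$ itself. Concretely, I would regard $\mathcal X = A' \otimes A$ as a 1-morphism $(a,A) \to (a,A')$ in $\Beq$, and use the preceding Lemma, which produces the isomorphism $\alpha: \dseqX \to \mathcal X^\star$ exhibiting the left and right adjoints of $\mathcal X$ in $\Beq$ as (twisted) duals. The composition $\otimes$ in $\Beq$ is by definition $\otimes_{A'}$ at the object $(a,A')$, so the algebra built from the abstract recipe $\mathcal X^\dagger \otimes \mathcal X$ of Proposition~\ref{prop:variant}(i), reinterpreted in $\Beq$, is precisely $\mathcal A = \mathcal X^\star \otimes_{A'} \mathcal X$.

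Granting the hypothesis that $\tr_{\mathrm{r}}(\alpha)$ is invertible, I would then invoke Proposition~\ref{prop:variant}(ii) with $\B$ replaced by $\Beq$. This yields $\mathcal X \otimes_{\mathcal A} \mathcal X^\star \cong I_{(a,A')} = A'$ in $\Beq$, and hence the isomorphism of objects $((a,A), \mathcal A) \cong ((a,A'), I_{(a,A')})$ in $(\Beq)_{\mathrm{eq}}$. Under the equivalence $(\Beq)_{\mathrm{eq}} \cong \Beq$ (the idempotent completion is already a completion in the relevant sense) the right-hand side becomes $((a,A'), A')$, which is the claim. No new bicategorical work is needed; this step is purely an instance of a result already proved, once the Lemma is in place.

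The substantive step is the specialisation: showing that if $A'$ is special then $\tr_{\mathrm{r}}(\alpha)$ is invertible. Here I would compute $\tr_{\mathrm{r}}(\alpha)$ graphically, using the explicit presentation of~$\alpha$ in~\eqref{eq:alpha}. The right trace closes the diagram with the evaluation and coevaluation of~$\mathcal X$ in~$\Beq$; because $\mathcal X = A' \otimes A$, these are themselves built from $\ev, \coev$ in~$\B$ together with splitting/projection over~$A'$. The Nakayama insertions $\gamma_{\dA}, \gamma_{\dA'}$ in~$\alpha$ are designed so that, after applying the Zorro moves, the Frobenius identities for $A$ and $A'$, and the compatibility $\gamma_{A^\dagger} \circ \phi^{-1} = \phi^{-1} \circ \gamma_A$ already used in the Lemma, the closed diagram collapses to $\bigl(

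\begin{tikzpicture}[very thick,scale=0.25,color=green!50!black, baseline]
\draw (0,-0.05) node[Odot] (D) {};
\draw (0,1.1) node[Odot] (E) {};
\draw (D) -- (E);
\end{tikzpicture}

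\bigr)$ for~$A'$ times the identity 2-morphism on an appropriate identity 1-morphism (and a nonzero combinatorial factor from the $A$-part, which gives an automorphism since $A$ is separable Frobenius with the given structure). Specialness of $A'$ then forces $\tr_{\mathrm{r}}(\alpha)$ to be a nonzero scalar multiple of the identity on $I_{(a,A)}$, and in particular invertible.

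The main obstacle is that last graphical computation: keeping careful track of how the two Nakayama twists interact with the two layers of evaluation/coevaluation (outer one in~$\B$, inner one involving $\otimes_{A'}$) and of which Frobenius/Zorro move applies where. The outcome is morally forced—the trace has to produce something proportional to

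\begin{tikzpicture}[very thick,scale=0.25,color=green!50!black, baseline]
\draw (0,-0.05) node[Odot] (D) {};
\draw (0,1.1) node[Odot] (E) {};
\draw (D) -- (E);
\end{tikzpicture}

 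for $A'$ because the $\alpha$ was tailored precisely to absorb the Nakayama-induced obstructions, just as in the Lemma—but verifying it without error is the work. Everything else in the theorem reduces to bookkeeping about the meaning of $\otimes$, $I$ and duals in $\Beq$.
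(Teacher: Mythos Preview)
Your proposal is correct and matches the paper's proof essentially line for line: the first part is exactly the one-sentence application of Proposition~\ref{prop:variant}(ii) in $\Beq$, and the second part is the graphical computation of $\tr_{\mathrm{r}}(\alpha)$, which the paper carries out via Zorro moves and the Frobenius/separability identities to obtain $1_A$ times the scalar $\varepsilon_{A'}\circ\eta_{A'}$. One minor remark: in the trace computation you will not actually need the compatibility $\gamma_{A^\dagger}\circ\phi^{-1}=\phi^{-1}\circ\gamma_A$ from the Lemma; the Nakayama insertions unwind directly via Zorro and Frobenius, and separability of~$A$ gives exactly $1_A$ rather than merely an automorphism.
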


\begin{proof}
The general statement immediately follows from Proposition~\ref{prop:variant}(ii) applied to the bicategory $\Beq$. 

For the last part we compute $\tr_{\mathrm{r}}(\alpha)$ to be
\begin{align*}
\begin{tikzpicture}[very thick,scale=0.65,color=blue!50!black, baseline=-0.925cm]
\draw[directedgreen] (1,0) .. controls +(0,1) and +(0,1) .. (2,0);
\draw[directedgreen] (0,0) .. controls +(0,2) and +(0,2) .. (3,0);
\draw[directed, ultra thick] (-1,0) .. controls +(0,-1) and +(0,-1) .. (0.5,0);
\draw[color=green!50!black] (2,0) -- (2,-2.5);
\draw[color=green!50!black] (3,0) -- (3,-2.5);
\draw[dotted] (0,0) -- (1,0);
\draw[redirectedgreen] (1,-2.5) .. controls +(0,-1) and +(0,-1) .. (2,-2.5);
\draw[redirectedgreen] (0,-2.5) .. controls +(0,-2) and +(0,-2) .. (3,-2.5);
\draw[redirected, ultra thick] (-1,-2.5) .. controls +(0,1) and +(0,1) .. (0.5,-2.5);
\draw[dotted] (0,-2.5) -- (1,-2.5);
\fill[color=green!50!black] (2,-1.25) circle (2.5pt) node[left] {};
\fill[color=green!50!black] (2.2,-1.25) circle (0pt) node[left] {{\small $\gamma_{A^\dagger}$}};
\fill[color=green!50!black] (3,-1.25) circle (2.5pt) node[left] {};
\fill[color=green!50!black] (3.25,-1.25) circle (0pt) node[left] {{\small $\gamma_{A'^\dagger}$}};
\draw[directed, ultra thick] (-2.5,0) .. controls +(0,1) and +(0,1) .. (-1,0);
\draw[directed, ultra thick] (-1,-2.5) .. controls +(0,-1) and +(0,-1) .. (-2.5,-2.5);
\draw[ultra thick] (-2.5,-2.5) -- (-2.5,0);
\fill[color=green!50!black] (-2.5,-2.5) circle (2.5pt) node[left] {};
\fill[color=green!50!black] (-2.5,0) circle (2.5pt) node[left] {};
\draw[color=green!50!black]  (-2.95,-4) .. controls +(0,0.5) and +(-0.15,0.15) .. (-2.5,-2.5);
\draw[color=green!50!black]  (-2.95,1.5) .. controls +(0,-0.5) and +(-0.15,0.15) .. (-2.5,0);
\end{tikzpicture}
\eq^{\mathrm{Zorro}}
\begin{tikzpicture}[very thick,scale=0.65,color=blue!50!black, baseline=-0.925cm]
\draw[directedgreen] (1,0) .. controls +(0,1) and +(0,1) .. (2,0);
\draw[directedgreen] (0,0) .. controls +(0,2) and +(0,2) .. (3,0);
\draw[color=green!50!black] (2,0) -- (2,-2.5);
\draw[color=green!50!black] (3,0) -- (3,-2.5);
\draw[color=green!50!black] (1,0) -- (1,-2.5);
\draw[color=green!50!black] (0,0) -- (0,-2.5);
\draw[redirectedgreen] (1,-2.5) .. controls +(0,-1) and +(0,-1) .. (2,-2.5);
\draw[redirectedgreen] (0,-2.5) .. controls +(0,-2) and +(0,-2) .. (3,-2.5);
\fill[color=green!50!black] (2,-1.25) circle (2.5pt) node[left] {};
\fill[color=green!50!black] (2.2,-1.25) circle (0pt) node[left] {{\small $\gamma_{A^\dagger}$}};
\fill[color=green!50!black] (3,-1.25) circle (2.5pt) node[left] {};
\fill[color=green!50!black] (3.25,-1.25) circle (0pt) node[left] {{\small $\gamma_{A'^\dagger}$}};
\fill[color=green!50!black] (-0,-2.5) circle (2.5pt) node[left] {};
\fill[color=green!50!black] (-0,0) circle (2.5pt) node[left] {};
\draw[color=green!50!black]  (-0.45,-4) .. controls +(0,0.5) and +(-0.15,0.15) .. (0,-2.5);
\draw[color=green!50!black]  (-0.45,1.5) .. controls +(0,-0.5) and +(-0.15,0.15) .. (0,0);
\end{tikzpicture}
=
\begin{tikzpicture}[very thick,scale=0.65,color=blue!50!black, baseline=-0.925cm]
\draw[directedgreen] (1,0) .. controls +(0,1) and +(0,1) .. (2,0);
\draw[directedgreen] (0,0) .. controls +(0,2) and +(0,2) .. (4,0);
%
%
\draw[directedgreen] (2,0) .. controls +(0,-0.5) and +(0,-0.5) .. (2.75,0);
\draw[-dot-, color=green!50!black] (2.75,0) .. controls +(0,0.5) and +(0,0.5) .. (3.5,0);
\draw[color=green!50!black] (3.125,0.7) node[Odot] (D) {}; 
\draw[color=green!50!black] (3.125,0.4) -- (D);
\draw[color=green!50!black] (3.5,-2.5) -- (3.5,0);
\draw[-dot-, color=green!50!black] (2.75,-2.5) .. controls +(0,-0.5) and +(0,-0.5) .. (3.5,-2.5);
\draw[color=green!50!black] (3.125,-3.2) node[Odot] (D) {}; 
\draw[color=green!50!black] (3.125,-2.9) -- (D);
\draw[directedgreen] (2.75,-2.5) .. controls +(0,0.5) and +(0,0.5) .. (2,-2.5);
\draw[color=green!50!black] (1,0) -- (1,-2.5);
\draw[color=green!50!black] (0,0) -- (0,-2.5);
\draw[redirectedgreen] (1,-2.5) .. controls +(0,-1) and +(0,-1) .. (2,-2.5);
\draw[redirectedgreen] (0,-2.5) .. controls +(0,-2) and +(0,-2) .. (4,-2.5);
\draw[directedgreen] (4,0) .. controls +(0,-0.5) and +(0,-0.5) .. (4.75,0);
\draw[-dot-, color=green!50!black] (4.75,0) .. controls +(0,0.5) and +(0,0.5) .. (5.5,0);
\draw[color=green!50!black] (5.125,0.7) node[Odot] (D) {}; 
\draw[color=green!50!black] (5.125,0.4) -- (D);
\draw[color=green!50!black] (5.5,-2.5) -- (5.5,0);
\draw[-dot-, color=green!50!black] (4.75,-2.5) .. controls +(0,-0.5) and +(0,-0.5) .. (5.5,-2.5);
\draw[color=green!50!black] (5.125,-3.2) node[Odot] (D) {}; 
\draw[color=green!50!black] (5.125,-2.9) -- (D);
\draw[directedgreen] (4.75,-2.5) .. controls +(0,0.5) and +(0,0.5) .. (4,-2.5);
\fill[color=green!50!black] (-0,-2.5) circle (2.5pt) node[left] {};
\fill[color=green!50!black] (-0,0) circle (2.5pt) node[left] {};
\draw[color=green!50!black]  (-0.45,-4) .. controls +(0,0.5) and +(-0.15,0.15) .. (0,-2.5);
\draw[color=green!50!black]  (-0.45,1.5) .. controls +(0,-0.5) and +(-0.15,0.15) .. (0,0);
\end{tikzpicture}
\eq^{\mathrm{Zorro}}
\begin{tikzpicture}[very thick,scale=0.65,color=blue!50!black, baseline=-0.925cm]
\fill[color=green!50!black] (-0,-2.5) circle (2.5pt) node[left] {};
\fill[color=green!50!black] (-0,0) circle (2.5pt) node[left] {};
\draw[color=green!50!black]  (-0.45,-4) .. controls +(0,0.5) and +(-0.15,0.15) .. (0,-2.5);
\draw[color=green!50!black]  (-0.45,1.5) .. controls +(0,-0.5) and +(-0.15,0.15) .. (0,0);
\draw[color=green!50!black] (0,0) -- (0,-2.5);
\draw[-dot-, color=green!50!black] (0,0) .. controls +(0,1) and +(0,1) .. (2,0);
\draw[color=green!50!black] (1,1.1) node[Odot] (D) {}; 
\draw[color=green!50!black] (1,0.8) -- (D);
\draw[-dot-, color=green!50!black] (0,-2.5) .. controls +(0,-1) and +(0,-1) .. (2,-2.5);
\draw[color=green!50!black] (1,-3.6) node[Odot] (D) {}; 
\draw[color=green!50!black] (1,-3.3) -- (D);
\draw[color=green!50!black] (2,0) -- (2,-2.5);
\draw[-dot-, color=green!50!black] (0.5,-1.25) .. controls +(0,1) and +(0,1) .. (1.5,-1.25);
\draw[color=green!50!black] (1,-0.2) node[Odot] (D) {}; 
\draw[color=green!50!black] (1,-0.5) -- (D);
\draw[-dot-, color=green!50!black] (0.5,-1.25) .. controls +(0,-1) and +(0,-1) .. (1.5,-1.25);
\draw[color=green!50!black] (1,-2.3) node[Odot] (D) {}; 
\draw[color=green!50!black] (1,-2.1) -- (D);
\end{tikzpicture}
\end{align*}
which by separability and the Frobenius property reduces to~$1_A$ multiplied by the number $
\begin{tikzpicture}[very thick,scale=0.25,color=green!50!black, baseline]
\draw (0,-0.05) node[Odot] (D) {}; 
\draw (0,1.1) node[Odot] (E) {}; 
\draw (D) -- (E); 
\end{tikzpicture} 
$ 
if~$A'$ is special. 
\end{proof}

Since every Frobenius algebra is self-dual we have $\mathcal A \cong A \otimes A' \otimes A$. In $\Beq$ not every two objects are isomorphic, but Theorem~\ref{thm:orbiequi} combined with \cite[Prop.\,4.2]{cr1210.6363} implies 
$$
(a,A \otimes A' \otimes A) \cong (a, A' )
\quad \text{in } \Beq . 
$$
Also note that the case $A'=I_a$ says that $a\in \B \subset \Beq$ is the $(A^\star \otimes A)$-orbifold of $(a,A)$, which in turn is the $A$-orbifold of~$a$.

\subsubsection*{Acknowledgements} 

We thank 
Matthias Gaberdiel, 
Andreas Recknagel, 
and 
Ingo Runkel
for helpful discussions. 

\appendix

\section{Discrete torsion from the perspective of boundary CFT}\label{app:CFT}

Landau-Ginzburg models with a quasi-homogenous superpotential are conjectured to flow to superconformal field theories with $\mathcal N=(2,2)$ supersymmetry in the infrared. Therefore, the boundary conditions and defects in discrete torsion orbifolds considered in this paper should have an interpretation within conformal field theory as well. In this appendix, we explain this in the example of a tensor product of two minimal models, repeating the analysis of Section~\ref{subsec:inducedmodules} in CFT language. Here, we can build on \cite{bg0503207}.

Recall that $\mathcal N=2$ minimal models occur at central charge
$$
c=\frac{3k}{k+2} \, .
$$
The bosonic subalgebra of the $\mathcal N=2$ algebra is then realised as the coset
$$
(\mathcal N=2)_{\mathrm{bos}} = \frac{\mathfrak{su}(2)_k \oplus \mathfrak{u}(1)_4}{\mathfrak{u}(1)_{2k+4}} \, .
$$
Accordingly, representations of this coset are labelled by integers $(l,m,s)$, where $l\in \{0, \dots, k\}$ labels the $\mathfrak{su}(2)_k$ representation, $m\in \Z_{2k+4}$ and $s \in \Z_4$ the $\mathfrak{u}(1)$ representations. Furthermore, these integers are subject to the constraint $l+m+s\in 2\Z$ and field identification relates the representations
$$
(l,m,s) \sim (k-l,m+k+2, s+2) \, .
$$
The label $s$ distinguishes NS and R sectors, $s$ even corresponds to the NS sector, $s$ odd to the R sector. 

The Landau-Ginzburg model with superpotential $W=x_1^d+x_2^d$ corresponds to a tensor product of two $\mathcal N=2$ minimal models at level $k=d-2$. More precisely, the Hilbert space of the corresponding conformal field theory is
\begin{align}
\H = & \bigoplus_{[l_1,m_1,s_1],[l_2,m_2,s_2]}
\Big( \left(\H_{[l_1,m_1,s_1]} \otimes \H_{[l_2,m_2,s_2]} \right) 
\otimes \left( \bar\H_{[l_1,m_1,s_1]} \otimes \bar\H_{[l_2,m_2,s_2]}
\right) 
\nonumber  \\
&  \quad
\oplus \,
\left(\H_{[l_1,m_1,s_1]} \otimes \H_{[l_2,m_2,s_2]} \right) 
\otimes \left( \bar\H_{[l_1,m_1,s_1+2]} \otimes 
\bar\H_{[l_2,m_2,s_2+2]} \right) \Big) \label{GSO1}
\end{align}
where the sums over $s_1, s_2$ are subject to spin alignment, i.\,e.~$s_1=s_2$ mod~$2$. Note that this is the Hilbert space of a GSO-projected theory. The other possible GSO-projection corresponds to the superpotential $W=x_1^d+x_2^d+z^2$. The permutation matrix factorisations considered in Example~\ref{ex:inducedcharges} correspond to permutation boundary states. Here, the boundary conditions imposed on the $\mathcal N=2$ generators are
\begin{align}
\left(L^{(1)}_n - \bar{L}^{(2)}_{-n} \right) 
|\!| B \rangle\!\rangle =
\left(L^{(2)}_n - \bar{L}^{(1)}_{-n} \right) 
|\!| B \rangle\!\rangle 
& = 0 \, , 
\nonumber \\
\left(J^{(1)}_n + \bar{J}^{(2)}_{-n} \right) |\!| B \rangle\!\rangle 
= \left(J^{(2)}_n + \bar{J}^{(1)}_{-n} \right) |\!| B \rangle\!\rangle 
& =  0 \, ,
\nonumber \\
\left(G^{\pm (1)}_r + \I \eta_1 \bar{G}^{\pm (2)}_{-r} \right) 
|\!| B \rangle\!\rangle = 
\left(G^{\pm (2)}_r + \I \eta_2 \bar{G}^{\pm (1)}_{-r} \right) 
|\!| B \rangle\!\rangle
& = 0 \, . \label{gluingp}
\end{align}
The relevant boundary states $|\!|  [  L,M,S_1,S_2 ] \rangle\!\rangle$ were constructed in \cite{bg0503207} and read
\be\label{eq:permstates}
\frac{1}{2\, \sqrt{2}} \sum_{l,m,s_1,s_2} 
\frac{S_{Ll}}{S_{0l}} \, \E^{\I\pi M m / (k+2)}\, 
\E^{-\I\pi (S_1 s_1 - S_2 s_2)/2}\,
|[l,m,s_1]\otimes [l,-m,-s_2]\rangle\!\rangle^\sigma  
\ee
where the sum runs over all $l,m,s_1,s_2$ for which 
$$
l+m+s_1 \quad \hbox{and} \quad s_1-s_2 \quad \hbox{are even.}
$$
The Ishibashi states $|[l,m,s_1]\otimes [l,-m,-s_2]\rangle\!\rangle^\sigma$ are in the sectors
$$
\Bigl(\H_{[l,m,s_1]}\otimes \H_{[l,-m,-s_2]}\Bigr) \otimes
\Bigl(\bar\H_{[l,m,s_2]}\otimes \bar\H_{[l,-m,-s_1]}\Bigr) ,
$$
and the labels $L,M, S_1, S_2$ are constrained by the requirement $L+M+S_1-S_2$ even. Shifting one of the $S$-labels by~2 maps a brane to its antibrane; the boundary state is invariant under a shift by~2 of both $S$-labels.

We are interested in permutation boundary states in a $(\Z_{k+2}\times \Z_{k+2})$-orbifold, possibly with discrete torsion. The generators $g_i$ of each $\Z_{k+2}$ act on the Hilbert space by phase multiplication
$$
\left. g_i \right|_{\H_{[l_1,m_1,s_1]}\otimes \H_{[l_2,m_2,s_2]} 
\otimes 
\bar\H_{[l_1,m_1,s_1']} \otimes \bar\H_{[l_2,m_2,s_2']}}
= \exp\left(2\pi \I \frac{m_i}{k+2} \right) \,.
$$
The boundary state is invariant under the diagonal subgroup $\Z_d \subset \Z_d \times \Z_d$. Thus, to discuss boundary conditions in the diagonal $\Z_d$-orbifold, the boundary state has to be formulated on circles twisted by $g^n=(g_1g_2)^n$. The relevant Ishibashi states are built on states from the $g^n$-twisted sector
\begin{align*}
\H_n = & \bigoplus_{[l_1,m_1,s_1],[l_2,m_2,s_2]}
\Bigl( \left(\H_{[l_1,m_1,s_1]} \otimes \H_{[l_2,m_2,s_2]} \right) 
\otimes \left( \bar\H_{[l_1,m_1-2n,s_1]} \otimes \bar\H_{[l_2,m_2-2n,s_2]}
\right)  \\
&  \quad
\oplus \,\, 
\left(\H_{[l_1,m_1,s_1]} \otimes \H_{[l_2,m_2,s_2]} \right) 
\otimes \left( \bar\H_{[l_1,m_1-2n,s_1+2]} \otimes 
\bar\H_{[l_2,m_2-2n,s_2+2]} \right) \Bigr) . 
\end{align*}
B-type permutation gluing conditions require $m_2=-\bar{m}_1= -m_1+2n$ and $m_1=-\bar{m}_2 = -m_2+2n$. 
The boundary state on a twisted circle then reads:
\begin{align*}
& |\!|  [ L,M,\hat{M}, S_1,S_2 ] \rangle\!\rangle \\
= & \, \frac{\E^{-\frac{\pi \I n}{k+2}(M+\hat{M})}}{2\, \sqrt{2}} \sum_{l,m,s_1,s_2} 
\frac{S_{Ll}}{S_{0l}} \, \E^{\I\pi M m / (k+2)}\, 
\E^{-\I\pi (S_1 s_1 - S_2 s_2)/2}\,
|[l,m,s_1]\otimes [l,-m+2n,-s_2]\rangle\!\rangle^\sigma  
\,,\nonumber 
\end{align*}
Here, $\hat{M}$ arises as an additional label for the boundary states and we require that $M+\hat{M}$ is even; 
$\hat{M}$ corresponds to the label $p$ in Example~\ref{ex:inducedcharges}.
The new label specifies a representation on the Chan-Paton labels, which is 1-dimensional in this case. Compared to~\eqref{eq:permstates}, there is an insertion of $g^n$ in the open string channel of the 1-loop amplitude. We refer to \cite{bg0503207} for explicit formulas. 

To formulate a boundary state in the $(\Z_d \times \Z_d)$-orbifold, we have to specify the action of $g_1$, $g_2$ on the twisted sector states. 
We single out a theory without discrete torsion, where $g_i$ acts as
$$
\left. g_i \right|_{\H_{[l_1,m_1,s_1]}\otimes \H_{[l_2,m_2,s_2]} \otimes \bar\H_{[l_1,m_1-2n,s_1']} \otimes \bar\H_{[l_2,m_2-2n,s_2']}}
= \exp\left(2\pi \I \frac{m_i-n}{k+2} \right) .
$$
In this theory, $g_1$ acts on the boundary states in all sectors as
$$
|\!|  [  L,M,\hat{M}, S_1,S_2 ] \rangle\!\rangle_{g^n} \lmt
|\!|  [  L,M +2,\hat{M}, S_1,S_2] \rangle\!\rangle_{g^n} \, . 
$$
Completely parallel to our discussion in terms of matrix factorisations, we can now formulate a consistent boundary state of the $(\Z_d\times \Z_d)$-orbifold by summing over all $g^n$ twisted sectors and taking the $g_1$-orbit: 
$$
|\!|  [  L,M,\hat{M}, S_1,S_2 ] \rangle\!\rangle_{\Z_d\times \Z_d} =
\frac{1}{{\cal N}}\sum_{n,r} \varepsilon(g_1^r, g^n) g_1^r |\!|  [  L,M,\hat{M}, S_1,S_2 ] \rangle\!\rangle_{g^n} \, ,
$$
where $1/{\cal N}$ is a normalisation factor.
This can be evaluated further by inserting the discrete torsion factor $\varepsilon_c(g_1^r, g^n) = \omega^{rnc}$. Hence, the boundary state becomes
\begin{equation}\label{eq:bssum}
|\!|  [  L,M,\hat{M}, S_1,S_2 ] \rangle\!\rangle_{\Z_d\times \Z_d} =
\frac{1}{{\cal N}}\sum_{n,r}  |\!|  [  L,M+2r,\hat{M}-2rc, S_1,S_2 ] \rangle\!\rangle_{g^n} \, .
\end{equation}

By construction, only Ishibashi states from sectors invariant under the respective orbifold projection contribute. RR-charges can be read off as prefactors of the Ishibashi states. It is evident that these charges agree (up to a normalisation factor) with those of the  brane of the diagonal orbifold, provided that the RR ground state survives the respective projection in the closed string sector. The projections in the open string sector can be determined by computing the one-loop formulas; we refer to \cite{bg0503207}, where the partition functions of the individual terms of the sum \eqref{eq:bssum} are given explicitly. Note that in the special case that $c=0$ (no discrete torsion) the result is independent of the label $M$ and we obtain the permutation boundary states of the mirror theory.

To include a non-trivial $\kappa$ in the CFT discussion, we have to modify the group action in the RR sector, altering all projections. As before, we choose an orbifold action with $\kappa_{(0,1)}=\kappa_{(1,0)}=1$. The action of $g_i$ is then modified by a factor of $(-1)^{s_i}$, where because of spin alignment $(-1)^{s_1}=(-1)^{s_2}$. The diagonal action by $g_1 g_2$ remains unchanged, such that we only have to include a factor $(-1)^{s_1}$ in the final projection formula:
$$
|\!|  [  L,M,\hat{M}, S_1,S_2 ] \rangle\!\rangle_{\Z_d\times \Z_d, \kappa} =
\frac{1}{{\cal N}}\sum_{n,r}  |\!|  [  L,M+2r,\hat{M}-2rc, S_1+2r,S_2 ] \rangle\!\rangle_{g^n} \, .
$$
Note that the shift $S_1 \mapsto S_1+2$ maps branes to antibranes, again in complete agreement with the matrix factorisation analysis.

\section{Quantum symmetry defects for cyclic groups}\label{app:qsdefect}

In this appendix, we show that the defect $\mathcal A_G^{\text{qs}}=A_G \otimes A_G$ is isomorphic as an $A_G$-bimodule to the quantum symmetry defect constructed in \cite[Sect.\,4.3]{br0712.0188} for the case $W=x^d$ and $G=\Z_d$. The latter defect is built from $|G|$ copies of $A_G$, with each copy carrying a different representation label of the left $A_G$-action. An equivalent way of writing this is $\bigoplus_{h\in G} {}_{\gamma_{A_G}^h} \!\!(A_G)$, which follows from the fact that shifting the representation label of any left $A_G$-module $X$ by~1 is equivalent to twisting the left action by the Nakayama automorphism $\gamma_{A_G}$ as
$$
\begin{tikzpicture}[very thick,scale=0.75,color=blue!50!black, baseline]
\draw (-0.4,-0.98) node[right] (X) {{\small$\;\;\,{}_{\gamma_{A_G}} X$}};
\draw (-0.4,0.78) node[right] (Xu) {{\small$\;\;\,{}_{\gamma_{A_G}} X$}};

\draw[color=green!50!black] (-0.75,-1) node[left] (B) {{\small$A_G$}};

\draw (0,-1) -- (0,1); 

\fill[color=green!50!black] (0,0) circle (2.5pt) node (meet) {};

\draw[color=green!50!black] (-0.75,-1) .. controls +(0,0.5) and +(-0.5,-0.25) .. (0,0);
\end{tikzpicture} 
=
\begin{tikzpicture}[very thick,scale=0.75,color=blue!50!black, baseline]

\draw (0,-1) node[right] (X) {{\small$X$}};
\draw (0,1) node[right] (Xu) {{\small$X$}};

\draw[color=green!50!black] (-0.75,-1) node[left] (B) {{\small$A_G$}};

\draw (0,-1) -- (0,1); 

\fill[color=green!50!black] (0,0) circle (2.5pt) node (meet) {};

\draw[color=green!50!black] (-0.75,-1) .. controls +(0,0.5) and +(-0.5,-0.25) .. (0,0);

\fill[color=green!50!black] (-0.62,-0.5) circle (2.5pt) node[left] (meet) {{\small$\gamma_{A_G}$}};

\end{tikzpicture} 
\, , 
$$
where
$$
\gamma_{A_G}=\sum_{g\in G} \det(g) \cdot 1_{{}_g I} \, , 
\quad 
\det(g) = \E^{2 \pi \I g /d}
$$
in the present case.

We now prove that for an arbitrary left $A_G$-module $X$ there is an isomorphism
$$
(A_G \otimes A_G) \otimes_{A_G} X \cong \bigoplus_{h\in G} {}_{\gamma_{A_G}^h} \!\!(A_G) \otimes_{A_G} X \, ,
$$
which is equivalent to the statement $A_G \otimes A_G \cong \bigoplus_{h\in G} {}_{\gamma_{A_G}^h} \!\!(A_G)$ (the former implies the latter by setting $X=A_G$). The twisted left unit actions provide an isomorphism
$$
\sum_{g\in G} {}_g (\lambda_X): \, A_G \otimes X \ \lra \ \bigoplus_{g \in G} {}_g X \, ,
$$
and we further have for every $h \in G$
\be\label{eq:qsfusion}
{}_{\gamma_{A_G}^h} \!\!(A_G) \otimes_{A_G} X \cong {}_{\gamma_{A_G}^h} \!\!(A_G \otimes_{A_G} X) \cong {}_{\gamma_{A_G}^h} \!\!X \, .
\ee
It thus suffices to prove $\bigoplus_{g \in G} {}_g X \cong \bigoplus_{h\in G} {}_{\gamma_{A_G}^h} \!\!X$. 

Letting $\gamma$ denote the representation of $G$ on $X$, we construct the map
\begin{align*}
& \varphi = \sum_{g,h\in G} \varphi_{h,g}: \  \bigoplus_{g \in G} {}_g X \ \lra \ \bigoplus_{h\in G} {}_{\gamma_{A_G}^h} \!\!X \, , \\
& \varphi_{h,g}: {}_g X \lra {}_{\gamma_{A_G}^h} \!\!X \, , \quad x \lmt \det(g)^h \gamma_g(x) \, ,
\end{align*}
which has an inverse $\varphi^{-1}= \frac{1}{|G|}\sum_{g,h\in G} \det(g)^{-h} \gamma_g^{-1}$. It remains to show that $\varphi$ is a left $A_G$-module map. The representation of $G$ on $\bigoplus_{g \in G} {}_g X$ is given by 
$$
\Gamma_{g'} = \sum_{g\in G} \pi_{g,g' g} \, , \quad g' \in G \, ,
$$
where $\pi_{g,g' g}: {}_{g'} ({}_g X) \cong {}_{g' g} X$ is simply a permutation of the summands, while the representation on $\bigoplus_{h\in G} {}_{\gamma_{A_G}^h} \!\!X$ is
$$
\widetilde\Gamma_{h'} = \sum_{h\in G} \det(h')^h \, \gamma_{h'} \, , \quad h' \in G \, .
$$
One then directly verifies that the diagram
$$
\begin{tikzpicture}[
			     baseline=(current bounding box.base), 
			     >=stealth,
			     descr/.style={fill=white,inner sep=3.5pt}, 
			     normal line/.style={->}
			     ] 
\matrix (m) [matrix of math nodes, row sep=4em, column sep=2.5em, text height=1.5ex, text depth=0.9ex] {%
{}_{g'g} X && {}_{\gamma_{A_G}^{h}} \!\!X\\
{}_{g'} ({}_g X) && {}_{g'}({}_{\gamma_{A_G}^{h}} \!\!X) \\
};
\path[font=\small] (m-1-1) edge[->] node[auto] {$ \varphi_{h,g'g} $} (m-1-3);
\path[font=\small] (m-2-1) edge[->] node[auto] {$ {}_{g'} (\varphi_{h,g}) $} (m-2-3) 
				  (m-2-1) edge[->]  node[auto] {$ \Gamma_{g'} $} (m-1-1); 
\path[font=\small] (m-2-3) edge[->] node[auto, swap] {$ \widetilde\Gamma_{g'} $} (m-1-3); 
\end{tikzpicture}
$$
commutes for every $g,g',h\in G$, establishing that $\varphi$ is a module map.

Let us finally note that writing the quantum symmetry defect of \cite{br0712.0188} in terms of the Nakayama twists allows us to straightforwardly determine its action on bulk fields and boundaries. Indeed, the action of ${}_{\gamma_{A_G}^h} \!\!(A_G)$ on RR bulk fields \cite[Sect.\,3.4]{BCP1} is given by
$$
\Hom(I,{}_g I) \ni \phi_g \lmt
\begin{tikzpicture}[very thick,scale=0.75,color=blue!50!black, baseline]
\draw (0,0) circle (1.25);
\fill (-52:1.3) circle (0pt) node[right] {{\small${}_{\gamma_{A_G}^h} \!\!(A_G)$}};
\draw[<-, very thick] (0.100,-1.25) -- (-0.101,-1.25) node[above] {}; 
\draw[<-, very thick] (-0.100,1.25) -- (0.101,1.25) node[below] {}; 
\fill[color=green!50!black] (135:0) circle (2.5pt) node[left] {{\small$\phi_g$}};
\draw[color=green!50!black] (0,0) .. controls +(0,0.6) and +(-0.4,-0.4) .. (45:1.25);
\fill[color=green!50!black] (45:1.25) circle (2.5pt) node[right] {};
\fill[color=green!50!black] (25:1.25) circle (2.5pt) node[right] {};
\draw[color=green!50!black] (25:1.25) .. controls +(0.3,0.4) and +(0,-0.5) .. (40:2.3);
\fill[color=green!50!black] (40:2.3) circle (0pt) node[above] {{\small$A_G$}};
\end{tikzpicture} 
\!\!\!\!\!
=
\det(g)^h
\begin{tikzpicture}[very thick,scale=0.75,color=blue!50!black, baseline]
\draw[color=green!50!black] (0,0) circle (1.25);
\fill[color=green!50!black] (-45:1.3) circle (0pt) node[right] {{\small$A_G$}};
\draw[<-, very thick, color=green!50!black] (0.100,-1.25) -- (-0.101,-1.25) node[above] {}; 
\draw[<-, very thick, color=green!50!black] (-0.100,1.25) -- (0.101,1.25) node[below] {}; 
\fill[color=green!50!black] (135:0) circle (2.5pt) node[left] {{\small$\phi_g$}};
\draw[color=green!50!black] (0,0) .. controls +(0,0.6) and +(-0.4,-0.4) .. (45:1.25);
\fill[color=green!50!black] (45:1.25) circle (2.5pt) node[right] {};
\fill[color=green!50!black] (25:1.25) circle (2.5pt) node[right] {};
\draw[color=green!50!black] (25:1.25) .. controls +(0.3,0.4) and +(0,-0.5) .. (40:2.3);
\fill[color=green!50!black] (40:2.3) circle (0pt) node[above] {{\small$A_G$}};
\end{tikzpicture} 
\!\!\!\!\!
= \det(g)^h \pi^{\text{RR}}_{A_G}(\phi_g)
$$
which reproduces the action of the quantum symmetry group as described e.\,g.~in \cite{ginsparg}. For the fusion of ${}_{\gamma_{A_G}^h} \!\!(A_G)$ with any left $A_G$-module $X$ we directly use the result~\eqref{eq:qsfusion}. From the discussion at the beginning of this section it then follows that ${}_{\gamma_{A_G}^h} \!\!(A_G)$ acts on $X$ by shifting its representation label by $h$, just as was found in \cite{br0712.0188}.


\begin{thebibliography}{HKK+}

\bibitem[ADD]{add0401}
S.~K.~Ashok, E.~Dell'Aquila, and D.-E.~Diaconescu, \textsl{Fractional {B}ranes in
  {L}andau-{G}inzburg {O}rbifolds}, \httpurl{projecteuclid.org/DPubS?service=UI&version=1.0&verb=Display&handle=euclid.atmp/1098389089}{Adv. Theor. Math. Phys. \textbf{8} (2004),
  461--513}, \href{http://www.arxiv.org/abs/hep-th/0401135}{[hep-th/0401135]}.

\bibitem[Asp]{Aspinwall:2000xv}
P.~S.~Aspinwall,
\textsl{A Note on the equivalence of Vafa's and Douglas's picture of discrete torsion},
\doi{10.1088/1126-6708/2000/12/029}{JHEP \textbf {0012} (2000) 029},
\href{http://arxiv.org/abs/hep-th/0009045}{[hep-th/0009045]}.

\bibitem[BCP]{BCP1}
I.~Brunner, N.~Carqueville, and D.~Plencner, 
\textsl{Orbifolds and topological defects}, 
to appear in Comm. Math. Phys., 
\href{http://arxiv.org/abs/1307.3141}{[arXiv:1307.3141]}.

\bibitem[BG]{bg0503207}
I.~Brunner and M.~R.~Gaberdiel, \textsl{Matrix factorisations and permutation branes}, \doi{10.1088/1126-6708/2005/07/012}{JHEP \textbf{0507} (2005), 012},
  \href{http://www.arxiv.org/abs/hep-th/0503207}{[hep-th/0503207]}. 

\bibitem[BHLS]{bhls0305}
I.~Brunner, M.~Herbst, W.~Lerche, and B.~Scheuner, \textsl{Landau-{G}inzburg
  {R}ealization of {O}pen {S}tring {TFT}}, \doi{10.1088/1126-6708/2006/11/043}{JHEP \textbf{0611} (2003), 043},
  \href{http://www.arxiv.org/abs/hep-th/0305133}{[hep-th/0305133]}.

\bibitem[BR1]{br0707.0922}
I.~Brunner and D.~Roggenkamp, \textsl{B-type defects in {L}andau-{G}inzburg
  models}, \doi{10.1088/1126-6708/2007/08/093}{JHEP \textbf{0708} (2007), 093},
  \href{http://arxiv.org/abs/0707.0922}{[arXiv:0707.0922]}.

\bibitem[BR2]{br0712.0188}
I.~Brunner and D.~Roggenkamp, \textsl{Defects and Bulk Perturbations of Boundary Landau-Ginzburg Orbifolds}, \doi{10.1088/1126-6708/2008/04/001}{JHEP \textbf{0804} (2008), 001},
  \href{http://arxiv.org/abs/0712.0188}{[arXiv:0712.0188]}.

\bibitem[CG]{cg0101143}
 B.~Craps and M.~R.~Gaberdiel, \textsl{Discrete torsion orbifolds and D-branes. 2.},
\doi{10.1088/1126-6708/2001/04/013}{JHEP \textbf{0104} (2001), 013},
  \href{http://arxiv.org/abs/hep-th/0101143}{[hep-th/0101143]}.

\bibitem[CM1]{khovhompaper}
N.~Carqueville and D.~Murfet, 
\textsl{Computing {K}hovanov-{R}ozansky homology and defect fusion}, 
\doi{10.2140/agt.2014.14.489}{Algebraic \& Geometric Topology \textbf{14} (2014), 489--537}, 
\href{http://arxiv.org/abs/1108.1081}{[arXiv:1108.1081]}. 

\bibitem[CM2]{cm1208.1481}
N.~Carqueville and D.~Murfet, \textsl{Adjunctions and defects in Landau-Ginzburg models}, \href{http://arxiv.org/abs/1208.1481}{[arXiv:1208.1481]}. 

\bibitem[CRCR]{CRCR}
N.~Carqueville, A.~Ros Camacho, and I.~Runkel, 
\textsl{Orbifold equivalent potentials}, 
\arxiv{1311.3354}{[arXiv:1311.3354]}.

\bibitem[CR]{cr1210.6363}
N.~Carqueville and I.~Runkel, \textsl{Orbifold completion of defect bicategories}, 
to appear in Quantum Topology, 
\href{http://arxiv.org/abs/1210.6363}{[arXiv:1210.6363]}.

\bibitem[DKR]{dkr1107.0495}
A.~Davydov, L.~Kong, and I.~Runkel, \textsl{Field theories with defects and the
  centre functor}, \href{http://www.ams.org/bookstore?fn=20&arg1=pspumseries&ikey=PSPUM-83}{Mathematical Foundations of Quantum Field Theory and Perturbative String Theory, Proceedings of Symposia in Pure Mathematics, AMS, 2011}, \href{http://arxiv.org/abs/1107.0495}{[arXiv:1107.0495]}.

\bibitem[DM]{dm1102.2957}
T.~Dyckerhoff and D.~Murfet, 
\textsl{Pushing forward matrix factorisations},
\doi{10.1215/00127094-2142641}{Duke Math. J. \textbf{162}, Number 7 (2013), 1249--1311}, 
\href{http://arxiv.org/abs/1102.2957}{[arXiv:1102.2957]}.

\bibitem[DF]{DF}
M.~R.~Douglas and B.~Fiol,
\textsl{D-branes and discrete torsion II},
\doi{10.1088/1126-6708/2005/09/053}{JHEP \textbf{0509} (2005) 053}, 
\href{http://arxiv.org/abs/hep-th/9903031}{[hep-th/9903031]}.

\bibitem[Dou]{Dou}
M.~R.~Douglas,
\textsl{D-branes and discrete torsion}, 
\href{http://arxiv.org/abs/hep-th/9807235}{[hep-th/9807235]}.

\bibitem[FFRS]{ffrs0909.5013}
J.~Fr\"ohlich, J.~Fuchs, I.~Runkel, and C.~Schweigert,
\textsl{Defect lines, dualities, and generalised orbifolds}, 
\doi{10.1142/9789814304634_0056}{Proceedings of the XVI International Congress on Mathematical Physics, Prague, August 3--8, 2009}, \href{http://arxiv.org/abs/0909.5013}{[arXiv:0909.5013]}.

\bibitem[FRS1]{tft1}
J.~Fuchs, I.~Runkel, and C.~Schweigert,
\textsl{TFT construction of RCFT correlators. I: Partition functions},
\doi{10.1016/S0550-3213(02)00744-7}{Nucl.~Phys.~B {\bf 646} (2002), 353--497}, 
\href{http://arxiv.org/abs/hep-th/0204148}{[hep-th/0204148]}.

\bibitem[FRS2]{tft3}
J.~Fuchs, I.~Runkel, and C.~Schweigert,
\textsl{TFT construction of RCFT correlators. III: Simple currents},
\doi{10.1016/j.nuclphysb.2004.05.014}{Nucl.\ Phys.\  B {\bf 694} (2004), 277--353}, \arxiv{hep-th/0403157}{[hep-th/0403157]}.

\bibitem[Gab]{gaberdiel}
M.~R.~Gaberdiel,
\textsl{Discrete torsion orbifolds and D branes},
\doi{10.1088/1126-6708/2000/11/026}{JHEP \textbf{0011} (2000) 026},
\href{http://arxiv.org/abs/hep-th/0008230}{[hep-th/0008230]}.

\bibitem[Gin]{ginsparg}
P.~H.~Ginsparg,
\textsl{Applied Condormal Field Theory},
\href{http://arxiv.org/abs/hep-th/9108028}{[hep-th/9108028]}.

\bibitem[Gom]{gomis}
J.~Gomis,
\textsl{D-branes on Orbifolds with Discrete Torsion And Topological Obstruction},
\doi{10.1088/1126-6708/2000/05/006}{JHEP \textbf{0005} (2000) 006},
\href{http://arxiv.org/abs/hep-th/0001200}{[hep-th/0001200]}.

\bibitem[HK]{hauer-krogh} 
T.~Hauer and M.~Krogh,
\textsl{D-branes in nonAbelian orbifolds with discrete torsion},
\href{http://arxiv.org/abs/hep-th/0109170}{[hep-th/0109170]}.

\bibitem[IV]{IntriligatorVafa1990}
K.~A.~Intriligator and C.~Vafa, \textsl{Landau-Ginzburg orbifolds}, \doi{10.1016/0550-3213(90)90535-L}{Nucl. Phys. B \textbf{339} (1990), 95--120}.

\bibitem[KL1]{kl0210}
A.~Kapustin and Y.~Li, \textsl{D-branes in {L}andau-{G}inzburg {M}odels and
  {A}lgebraic {G}eometry}, \doi{10.1088/1126-6708/2003/12/005}{JHEP \textbf{0312} (2003), 005},
  \href{http://www.arxiv.org/abs/hep-th/0210296}{[hep-th/0210296]}.
  
\bibitem[KR]{kr0405232}
A.~Kapustin and L.~Rozansky, \textsl{On the relation between open and closed
  topological strings}, \doi{10.1007/s00220-004-1227-z}{Commun. Math. Phys. \textbf{252} (2004), 393--414},
  \href{http://arxiv.org/abs/hep-th/0405232}{[hep-th/0405232]}.

\bibitem[Laz]{l0312}
C.~I.~Lazaroiu, 
\textsl{On the boundary coupling of topological {L}andau-{G}inzburg models}, 
\doi{10.1088/1126-6708/2005/05/037}{JHEP \textbf{0505} (2005), 037},
\arxiv{hep-th/0312286}{[hep-th/0312286]}.

\bibitem[PV]{pv1002.2116}
A.~Polishchuk and A.~Vaintrob, \textsl{Chern characters and {H}irzebruch-{R}iemann-{R}och formula for matrix factorisations}, 
\doi{doi:10.1215/00127094-1645540}{Duke Math.\ J.\ {\bf 161} (2012), 1863--1926},
\href{http://arxiv.org/abs/1002.2116}{[arXiv:1002.2116]}. 

\bibitem[Sha]{sharpe}
E.~R.~Sharpe,
\textsl{Discrete Torsion},
\doi{10.1103/PhysRevD.68.126003}{Phys.~Rev.~D \textbf{68} (2003) 126003},
\href{http://arxiv.org/abs/hep-th/0008154}{[hep-th/0008154]}.

\bibitem[Vaf1]{VafaDiscreteTorsion}
C.~Vafa, \textsl{Modular invariance and discrete torsion on orbifolds}, 
\doi{10.1016/0550-3213(86)90379-2}{Nucl. Phys.~B \textbf{273} (1986), 592--606}.

\bibitem[Vaf2]{v1989}
C.~Vafa, \textsl{String vacua and orbifoldized LG models}, 
\doi{10.1142/S0217732389001350}{Mod. Phys. Lett.~A~\textbf{4} (1989), 1169}.

\bibitem[Wal]{w0412274}
J.~Walcher, \textsl{Stability of Landau-Ginzburg branes}, \doi{10.1063/1.2007590}{J.~Math.~Phys.~\textbf{46} (2005), 082305}, 
    \href{http://arxiv.org/abs/hep-th/0412274}{[hep-th/0412274]}.

\end{thebibliography}
\end{document}